\documentclass[journal,onecolumn]{IEEEtran}
\ifCLASSINFOpdf
\else
\fi
%
%

%
%
\usepackage{cite}
\usepackage{amsmath,amssymb,amsfonts}
\usepackage{algorithmic}
\usepackage{graphicx}
\usepackage{textcomp}
\usepackage{xcolor}
\usepackage{lipsum}
\usepackage{tikz}
\usepackage{amsthm}
\newtheorem{theorem}{Theorem}

\newtheorem{remark}{Remark}
\theoremstyle{definition}
\newtheorem{definition}{Definition}
\usepackage{commath}
\usepackage{mathtools}
\graphicspath{ {./images/} }
\def\BibTeX{{\rm B\kern-.05em{\sc i\kern-.025em b}\kern-.08em
    T\kern-.1667em\lower.7ex\hbox{E}\kern-.125emX}}
\DeclarePairedDelimiterX\MeijerM[3]{\lparen}{\rparen}%
{\begin{smallmatrix}#1 \\ #2\end{smallmatrix}\delimsize\vert\,#3}

\newcommand\MeijerG[8][]{%
  G^{\,#2,#3}_{#4,#5}\MeijerM[#1]{#6}{#7}{#8}}

\hyphenation{op-tical net-works semi-conduc-tor}

\begin{document}
%
\title{Performance Analysis of Multiple-Antenna Ambient Backscatter Systems at Finite Blocklengths\\}
%
%
%

\author{Likun Sui, \emph{Student Member}, Zihuai Lin, \emph{Senior Member}, Pei Xiao, \emph{Senior Member}, \tabularnewline
H. Vincent Poor, \emph{IEEE Fellow}, Branka Vucetic, \emph{IEEE Fellow}
\thanks{This research is supported by Australian Research Council (ARC) Discovery
projects DP190101988.
Likun Sui, Zihuai Lin and Branka Vucetic are with the School of Electrical and Information Engineering,
The University of Sydney, Australia (e-mail: \{likun.sui,zihuai.lin, branka.vucetic\}@sydney.edu.au).

Pei Xiao is with the Institute for Communication Systems (ICS), University of Surrey, UK (e-mail: p.xiao@surrey.ac.uk )

H. Vincent Poor is with the School of Electrical and Computer Engineering,
Princeton University, USA (e-mail: poor@princeton.edu).}}

\maketitle

\begin{abstract}
This paper analyzes the maximal achievable rate for a given blocklength and error probability over a multiple-antenna ambient backscatter channel with perfect channel state information at the receiver. The result consists of a finite blocklength channel coding achievability bound and a converse bound based on the Neyman-Pearson test and the normal approximation based on the Berry-Esseen Theorem. Numerical evaluation of these bounds shows fast convergence to the channel capacity as the blocklength increases and also proves that the channel dispersion is an accurate measure of the backoff from capacity due to finite blocklength. 
\end{abstract}

\begin{IEEEkeywords}
Channel dispersion, finite blocklength regime, ambient backscatter communications, MIMO channel, achievability bound, converse bound.
\end{IEEEkeywords}

%
\IEEEpeerreviewmaketitle

\section{Introduction}\label{intro}

%
%
%
%

\IEEEPARstart{T}{he} Internet of Things (IoT) has drawn considerable attention from both academia and industry. However, as the demands for IoT increases dramatically, the provisioning of power to massive numbers of devices becomes a significant challenge. 
The novel concept of using passive communication techniques to enable communications for low-power devices is known as ambient backscatter \cite{b1,b2,b3a,b3b,b3c}, 
which is an RF power transfer technique \cite{RF_energy1,RF_energy2,RF_energy3,RF_energy4}. 
This approach offers a promising solution for communications between batteryless devices and it will enable the future growth of IoT systems.\par
Ambient backscatter communications use RF signals to transmit information symbols and harvest energy, resulting in battery-free operations.
The basic operating principles of ambient backscatter communications are as follows:
\begin{enumerate}
    \item A tag transmits a symbol, either $1$ or $-1$,  by backscattering and modulating a radio frequency (RF) signal from an existing ambient source.
    \item The receiver receives both the signal from the ambient source and the backscattered signal from the tag.
\end{enumerate}\par
Compared with active radio protocols, such as Wi-Fi, Bluetooth and ZigBee, an Ambient Backscatter Communication (AmBC) system has a relatively limited data rate. To mitigate this problem, multiple antenna techniques can be used to increase the date rate. For example, the authors of \cite{b23} use multiple antenna-based orthogonal frequency division multiplexing (OFDM) to design an AmBC system and cope with different channel conditions. Many current research works have focused on analyzing achievable rate and capacity for AmBC systems in an infinite blocklength regime \cite{b24}\cite{b25}. However, in practice, 
it is critical to evaluate how to maintain the desired error probability at a given finite blocklength. \par
The fundamental theorem of reliable data transmission limits over a noisy channel in terms of the mutual information $I(X;Y)$ between input $X$ and output $Y$ has been established in \cite{b14}.  The relationship between the data transmission rate and error probability has been demonstrated by various bounds in the finite blocklength regime \cite{b6}. In these bounds, the so-called information density plays an essential role, which is defined as
\begin{align*}
    i(X;Y)&=\log{\frac{dP_{XY}}{d(P_X\times P_Y)}(X,Y)} = \log{\frac{dP_{Y|X=x}}{dP_Y}(y)}
\end{align*}\par
In this paper, 
we consider an AmBC system with  multiple antennas on both the RF source and receiver sides and a single antenna tag. We analyze achievability and converse bound for the system together with normal approximations.
We first study a multiple input and multiple output (MIMO) channel of the AmBC system from the information-theoretic point of view. For the channel model, we consider the channel between the receiver and the RF source and the channel through the tag as a whole \cite{b24,b16}. 
We derive a corresponding achievability bound, which is defined as a lower bound on the size of a code that can be guaranteed to exist with a given arbitrary blocklength and error probability. And we further establish a corresponding converse bound, which is an upper bound on the size of any code with a given arbitrary blocklength and error probability. We then demonstrate that the channel capacity $C$ for the studied AmBC system in a finite blocklength regime is characterized by the channel dispersion $V$\cite{b21}, which is defined as a parameter to assess the stochastic variability of the channel relative to a deterministic channel with the same capacity \cite{b6,b9}. 

 

\subsection{Contributions}
\begin{itemize}

\item  We use binary hypothesis testing as a fundamental basis to provide achievability and converse bounds on the maximal achievable rate $R(n,\epsilon)$ for an AmBC system with multiple transmit and receive antennas. We consider the case when the transmitter and receiver have full channel state information (CSI) and hence we can perform waterfilling power allocation. 

\item Furthermore, 
to complete our achievability bound, we utilize the characteristic function to show that the output distribution is complex Gaussian.
Then we use mathematical methods to calculate the lower bounded $\kappa_{\tau}$ in (\ref{eq1}).

\item For the converse bound, we utilize the Mellin transform and Meijer G-function to obtain an upper bound on the auxiliary channel which is a product of $m$ copies of the probability density function (PDF) of Gamma distributed variables and apply Lebesgue measure to upper bound its output space.\par

\item  For comparison with the asymptotic performance of our achievability and converse bounds, we provide a normal approximation, which approximates the relationship among the rate $R(n,\epsilon)$, the channel capacity $C$ and the channel dispersion $V$.

\item In addition, we use the Berry-Esseen theorem to complete the proof. There is a $\frac{1}{\sqrt{n}}$ rate penalty for our AmBC system when compared to the Shannon channel capacity. 
\end{itemize}

\subsection{Notation}

We use lowercase letters to represent scalars. For a scalar complex value $x$, we denote by $|x|$, $\Re\{x\}$ and $\Im\{x\}$ its modulus, real part and imaginary part, respectively. A bold uppercase letter such as $\mathbf{X}$ denotes a random vector and its realization is represented by a bold lowercase symbol such as $\mathbf{x}$. Uppercase letters of special fonts are used to denote matrices, such as $\mathsf{Y}$ denotes a deterministic matrix and  $\mathbb{Y}$ represents a random matrix. We use $\mathbf{I}_a$ to denote the identity matrix of size $a\times a$. The superscript $\mathbb{A}^{H}$ denotes the Hermitian transposition of a matrix $\mathbb{A}$. $tr(\mathbb{A})$ denotes the trace of the matrix of $\mathbb{A}$. $\mathcal{CN}(\mu,\sigma^2)$ represents a complex Gaussian distribution with a mean of $\mu$ and a variance of $\sigma^2$; in particular, a complex Gaussian random variable $X\sim\mathcal{CN}(0,\sigma^2)$ with independent and identically distributed zero mean Gaussian real and imaginary components is circularly symmetric. $\norm{\mathbb{A}}$ stands for the Frobenius norm of a matrix $\mathbb{A}$, which is $\norm{\mathbb{A}}=\sqrt{tr(\mathbb{A}\mathbb{A}^{H})}$. $\mathbb{R}_{+}$ stands for the nonnegative real line; in particular, $\mathbb{R}_{+}^{m}$ is the nonnegative orthant of the $m$-dimensional real Euclidean spaces. $\log(\cdot)$ denotes the natural logarithm. We use $P_{Y|X}:\mathcal{A}\mapsto\mathcal{B}$ to denote a conditional probability measure between input and output spaces $\mathcal{A}$ and $\mathcal{B}$. $\mathbb{E}[x]$ denotes the statistical expectation of $x$ and $\mathbb{P}[A]$ denotes the probability of an event $A$.

The rest of this paper is organized as follows. Section \ref{sys} describes the system model and reviews the definition of a channel code with perfect channel estimation at the receiver. Section \ref{ach} derives the achievability bound for our system model. Section \ref{con} presents the converse bound for the investigated AmBC system. Section \ref{nor} gives the channel dispersion and the normal approximation of the studied system. Section \ref{tag} shows the relationship between the error probability of the source and tag signals. Numerical results are presented in Section \ref{num}. Finally, Section \ref{conclu} concludes the paper.

\section{System Model}\label{sys}

\begin{figure}[!t]
    \centering
    \includegraphics[width=2.5in]{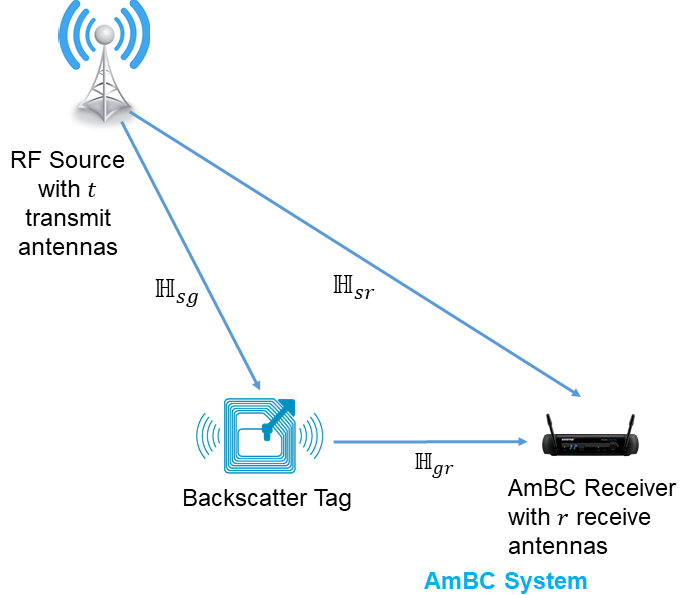}
    \caption{System model for ambient backscatter communications.}
    \label{fig:sys}
\end{figure}
We consider a MIMO ambient backscatter communication system with one RF source, one receiver and one backscatter tag with no battery as depicted in Fig. \ref{fig:sys}. The RF source and the receiver have $t$ and $r$ antennas, respectively and the tag has a single antenna.  We let $m\overset{\Delta}{=}\min(t,r)$ and denote by $\mathbb{H}_{sg}$, $\mathbb{H}_{gr}$ and $\mathbb{H}_{sr}$ the channel coefficient matrices between the source and tag, the tag and receiver, the source and receiver, respectively. And $\mathbb{H}_{sg}\in\mathbb{C}^{t\times 1},\mathbb{H}_{gr}\in\mathbb{C}^{1\times r}$,$\mathbb{H}_{sr}\in\mathbb{C}^{t\times r}$\cite{b13}. \par
A part of the tag received signal will be harvested to power the circuit of the tag, the rest would be backscattered to accomplish $``1"$ and $``-1"$ transmission. Without loss of generality, we assume tag's symbol remains unchanged for one block data transmission from the source. We denote the tag's symbol as $d\in\{-1, 1\}$. At the receiver, the received signal $\mathbb{Y}$ can be expressed by
\begin{equation}
    \mathbb{Y}=\mathbb{X}(\mathbb{H}_{sr}+\mathbb{H}_{sg}\mathbb{H}_{gr}Ad)+\mathbb{W},
\end{equation}
where $\mathbb{X}\in\mathbb{C}^{n\times t}$ is the signal transmitted over $n$ channel uses; $\mathbb{Y}\in\mathbb{C}^{n\times r}$ is the corresponding received signal; channel coefficients $\mathbb{H}_{sg}$, $\mathbb{H}_{gr}$ and $\mathbb{H}_{sr}$ are random but remain constant over the $n$ channel uses\cite{b7}\cite{b8}. $\mathbb{W}\in\mathbb{C}^{n\times r}$ is the additive noise at the receiver, which is independent of $\mathbb{H}_{sg}$, $\mathbb{H}_{gr}$ and $\mathbb{H}_{sr}$ and has independent and identically distributed (i.i.d) $\mathcal{CN}(0,1)$ entries; $A$ is the scattering efficiency of the tag.\par
Now let us denote $\mathbb{H}_0=\mathbb{H}_{sr}$ and $\mathbb{H}_1=A\mathbb{H}_{sg}\mathbb{H}_{gr}$, then  
\begin{equation}
\mathbb{Y}=\left\{
    \begin{array}{lr}
    \mathbb{X}(\mathbb{H}_0-\mathbb{H}_1)+\mathbb{W}, & d=-1,  \\
    \mathbb{X}(\mathbb{H}_0+\mathbb{H}_1)+\mathbb{W}, & d=1, 
    \end{array}
\right.    
\end{equation}\par
Next, we define a channel code with perfect channel state information at the receiver and transmitter \cite{b10}:\par
\begin{definition}
    an $(n,M,\epsilon)$ code consists of:
    \begin{enumerate}
    \item an encoder $f_t:\{1,\dots,M\}\times\mathbb{C}^{t\times r}\mapsto\mathbb{C}^{n\times t}$ that maps the message $j\in \{1,\dots,M\}$ and the channel $\mathsf{H}$ to a codeword $\mathsf{X}=f_t(j,\mathsf{H})$ which satisfies:
    \begin{align}
        \norm{\mathsf{X}}^2&=\norm{f_t(j,\mathsf{H})}^2\leq nP, \quad \forall j=1,\dots, M, \quad \forall \mathsf{H}\in\mathbb{C}^{t\times r}
    \end{align}
    \item a decoder $h_r:\mathbb{C}^{n\times r}\times\mathbb{C}^{t\times r}\mapsto\{1,\dots,M\}$ which satisfies:
    \begin{equation}
        \max_{1\leq j\leq M}{\mathbb{P}[h_r(\mathbb{Y},\mathbb{H})\neq J|J=j]\leq\epsilon}
    \end{equation}
    \end{enumerate}
\end{definition}

\section{Achievability Bound}\label{ach}

In this section, we assume the case where channel side information (CSI) is available at both the transmitter and receiver\cite{b10}. Our achievability bound is based on the $\kappa\beta$ bound which was recently provided in \cite{b6}. The achievability bound for the investigated AmBC system is given below and the corresponding proof builds on the characteristic of parallel AWGN channels.
\begin{theorem}\label{theorem1}
Let $g_1\geq\dots\geq g_m$ be the $m$ largest eigenvalues of a channel matrix, and let $\mathbf{g}$ represent $[g_1,\dots,g_m]^{T}$.
\begin{equation}
\mathbf{g}=\left\{
    \begin{array}{lr}
    \lambda_{max}\Big((\mathbb{H}_0-\mathbb{H}_1)^H(\mathbb{H}_0-\mathbb{H}_1)\Big),& d=-1,\\
    \lambda_{max}\Big((\mathbb{H}_0+\mathbb{H}_1)^H(\mathbb{H}_0+\mathbb{H}_1)\Big),& d=1,
    \end{array}
\right.  
\end{equation}
where $\lambda_{max}(\cdot)$ is a function which calculates the $m$ largest eigenvalues.\par
We consider power constraint with the water-filling strategy\cite{b11,b12}, and the following constraint set:
\begin{equation}\label{eq2}
    F_n\overset{\Delta}{=}\{\mathsf{X}: p_j(\mathsf{X})= [\lambda-\frac{1}{g_j}]^{+}\},
\end{equation}
where $\lambda$ is the solution of 
\begin{equation}
    \sum_{j=1}^{m} [\lambda-\frac{1}{g_j}]^{+} = P.
\end{equation}\par
Then the distributions of the information density under $P_{\mathbb{Y}}$ and under $P_{\mathbb{Y}|\mathbb{X}=\mathsf{X}}$ are:
\begin{equation}\label{eq3}
    G_n(\mathbf{p},\mathbf{g}) = \sum_{i=1}^{n}\sum_{j=1}^{m}\Big(\log{(1+g_jp_j(\mathsf{X}))}+1-\big| \sqrt{g_jp_j(\mathsf{X})}Z_{i,j}-\sqrt{1+g_jp_j(\mathsf{X})} \big| ^{2} \Big)
\end{equation}
and
\begin{equation}\label{eq4}
    H_n(\mathbf{p},\mathbf{g}) = \sum_{i=1}^{n}\sum_{j=1}^{m}\Big(\log{(1+g_jp_j(\mathsf{X}))}+1-\frac{\big| \sqrt{g_jp_j(\mathsf{X})}Z_{i,j}-1\big|^{2}}{1+g_jp_j(\mathsf{X})}\Big)
\end{equation}
respectively, where $Z_{i,j},i=1,\dots,n,j=1,\dots,m$ are i.i.d. $\mathcal{CN}(0,1)$ distributed random variables. For every $n$ and $0<\epsilon<1$, we have
\begin{equation}
    R(n,\epsilon|d)\geq \frac{1}{n}\log{\frac{\kappa_{\tau}}{\beta_{1-\epsilon+\tau}(P_{\mathbb{Y}},P_{\mathbb{Y}|\mathbb{X}=\mathsf{X}}|d)}}, 
\end{equation}
where $\beta_{1-\epsilon+\tau}(P_{\mathbb{Y}},P_{\mathbb{Y}|\mathbb{X}=\mathsf{X}}|d)$ can be defined by
\begin{equation}
    \beta_{1-\epsilon+\tau}(P_{\mathbb{Y}},P_{\mathbb{Y}|\mathbb{X}=\mathsf{X}}|d)=\mathbb{P}[G_n\geq\gamma_n|d],
\end{equation}
where $\gamma_n$ is chosen to satisfy
\begin{equation}
    \mathbb{P}[H_n\geq\gamma_n|d]=1-\epsilon+\tau.
\end{equation}\par
Further, we can lower-bound $\kappa_{\tau}$ by, 
\begin{equation}\label{eq35}
    \kappa_{\tau}\geq \tau/C_1,
\end{equation}
where $C_1$ is a constant.\par
We then obtain the achievability bound in a simple form.
\begin{equation}\label{eq1}
    R(n,\epsilon|d)\geq \frac{1}{n}\log{\frac{\tau/C_1}{\beta_{1-\epsilon+\tau}(P_{\mathbb{Y}},P_{\mathbb{Y}|\mathbb{X}=\mathsf{X}}|d)}} 
\end{equation}\par
Due to the conditional probability, we have
\begin{equation}
    R(n,\epsilon)\geq\frac{\mathbb{P}[d=-1]}{n}\log{\frac{\tau/C_1}{\beta_{1-\epsilon+\tau}(P_{\mathbb{Y}},P_{\mathbb{Y}|\mathbb{X}=\mathsf{X}}|d=-1)}}+\frac{\mathbb{P}[d=1]}{n}\log{\frac{\tau/C_1}{\beta_{1-\epsilon+\tau}(P_{\mathbb{Y}},P_{\mathbb{Y}|\mathbb{X}=\mathsf{X}}|d=1)}}
\end{equation}
\end{theorem}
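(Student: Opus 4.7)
The plan is to apply the $\kappa\beta$ achievability bound of \cite{b6} to the conditional channel obtained by fixing the tag symbol $d$, and then to de-condition via the law of total probability. Conditioned on $d\in\{-1,+1\}$, the AmBC model reduces to $\mathbb{Y}=\mathbb{X}\,\widetilde{\mathbb{H}}_d+\mathbb{W}$ with $\widetilde{\mathbb{H}}_d=\mathbb{H}_0+d\,\mathbb{H}_1$ and i.i.d.\ $\mathcal{CN}(0,1)$ noise entries. Taking an SVD of $\widetilde{\mathbb{H}}_d$ and absorbing the unitary factors into pre- and post-processing at the transmitter and receiver (legitimate because CSI is available at both ends), the channel decomposes into $m$ parallel scalar complex AWGN sub-channels whose squared singular values are precisely the ordered eigenvalues $g_1\ge\cdots\ge g_m$ appearing in the statement. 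The problem thereby reduces to the parallel-AWGN setting already treated in the finite-blocklength literature, conditional on $\mathbf{g}$.

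Next I would specialise the input law to Gaussian water-filling, $\mathcal{CN}(0,p_j)$ on sub-channel $j$ with $p_j=[\lambda-1/g_j]^{+}$ chosen so that $\sum_j p_j=P$; this defines the permissible set $F_n$ in (\ref{eq2}). For any $\mathsf{X}\in F_n$ both $P_{\mathbb{Y}\mid\mathbb{X}=\mathsf{X}}$ and the induced marginal $P_{\mathbb{Y}}$ are products of complex Gaussians across the $nm$ degrees of freedom, so the log-likelihood ratio $\log(dP_{\mathbb{Y}\mid\mathbb{X}=\mathsf{X}}/dP_{\mathbb{Y}})$ admits a closed form. Evaluating it with the output generated under $P_{\mathbb{Y}\mid\mathbb{X}=\mathsf{X}}$ (additive noise $\mathcal{CN}(0,1)$) gives the expression $G_n(\mathbf{p},\mathbf{g})$ in (\ref{eq3}); evaluating the same statistic with the output generated under $P_{\mathbb{Y}}$ (variance $1+g_jp_j$ per coordinate) gives $H_n(\mathbf{p},\mathbf{g})$ in (\ref{eq4}). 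Selecting $\gamma_n$ so that $\mathbb{P}[H_n\ge\gamma_n\mid d]=1-\epsilon+\tau$ and reading off $\beta_{1-\epsilon+\tau}=\mathbb{P}[G_n\ge\gamma_n\mid d]$ then yields the conditional bound $R(n,\epsilon\mid d)\ge \frac{1}{n}\log(\kappa_\tau/\beta_{1-\epsilon+\tau})$ directly from the general $\kappa\beta$ inequality.

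The remaining step, and what I expect to be the main obstacle, is to produce the explicit constant in (\ref{eq35}), namely $\kappa_\tau\ge\tau/C_1$. My approach is to use the characteristic function of the product complex-Gaussian $P_{\mathbb{Y}}$ to verify that $F_n$ meets the permissibility hypothesis of \cite{b6}, and then to choose an auxiliary output distribution $Q_{\mathbb{Y}}$ (the same Gaussian measure) for which the Radon--Nikodym derivative $dP_{\mathbb{Y}\mid\mathbb{X}=\mathsf{X}}/dQ_{\mathbb{Y}}$ can be shown, via a Markov/Chebyshev step applied to the Gaussian quadratic forms that appear, to be bounded by a constant on a set of probability at least $\tau$ under every $P_{\mathbb{Y}\mid\mathbb{X}=\mathsf{X}}$. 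The delicate part is obtaining the bound uniformly in $\mathsf{X}\in F_n$ while handling sub-channels with $p_j=0$, since those degenerate directions must be excised from the Gaussian-density argument rather than allowed to inflate the derivative; producing a $C_1$ that depends only on $P$ and $\mathbf{g}$, and that remains compatible with the subsequent normal-approximation step, is the technically finicky point of the proof.

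Once $\kappa_\tau\ge\tau/C_1$ has been secured, substitution into the conditional $\kappa\beta$ bound gives (\ref{eq1}), and a final weighting of the two conditional bounds by $\mathbb{P}[d=-1]$ and $\mathbb{P}[d=+1]$ together with the law of total probability delivers the displayed unconditional expression for $R(n,\epsilon)$ and completes the theorem.
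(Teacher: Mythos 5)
Your overall architecture is the same as the paper's: condition on $d$, diagonalize $\mathbb{H}_0+d\,\mathbb{H}_1$ into $m$ parallel AWGN sub-channels with gains $g_j$, take a product complex-Gaussian auxiliary output with water-filling powers and per-mode variance $1+g_jp_j(\mathsf{X})$, apply the $\kappa\beta$ bound conditionally, and average over $d$. (The paper additionally justifies the Gaussian output via characteristic functions of the product term $h_{sg}h_{gr}$, which you skip; that is defensible since the auxiliary output in the $\kappa\beta$ bound is a free choice.) One slip to fix: you have swapped $G_n$ and $H_n$. In the theorem, $H_n$ is the log-likelihood ratio with the output drawn from $P_{\mathbb{Y}|\mathbb{X}=\mathsf{X}}$ (so $\mathbb{E}[H_n]=n\sum_j\log(1+g_jp_j(\mathsf{X}))$ and the threshold condition $\mathbb{P}[H_n\ge\gamma_n]=1-\epsilon+\tau$ is imposed under the true channel, as Neyman--Pearson requires), while $G_n$ is the same statistic with the output drawn from $P_{\mathbb{Y}}$, giving $\beta=\mathbb{P}[G_n\ge\gamma_n]$. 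With your assignment the threshold is set under the auxiliary measure and $\beta$ evaluated under the channel, which inverts the test.

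The genuine gap is in the step you defer, which is the actual content of the paper's proof: establishing $\kappa_\tau\ge\tau/C_1$ with an explicit $n$-independent $C_1$. Your plan to bound the Radon--Nikodym derivative $dP_{\mathbb{Y}|\mathbb{X}=\mathsf{X}}/dQ_{\mathbb{Y}}$ by a constant on a set of $P_{\mathbb{Y}|\mathbb{X}=\mathsf{X}}$-probability $\tau$ cannot work on the full output space: that derivative equals $\exp\{i(\mathsf{X};\mathbb{Y})\}$, which concentrates around $e^{nC}$ under $P_{\mathbb{Y}|\mathbb{X}=\mathsf{X}}$, so no constant bound holds where the $\tau$ mass lives, and a Markov/Chebyshev estimate on the quadratic forms only controls tails, not the likelihood ratio itself. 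The missing idea is the paper's reduction to the radial sufficient statistics $U_j=\sum_i(1+g_jp_j(\mathsf{X}))|Z_i|^2$ (Gamma under $P_{\mathbb{Y}}$) and $T_j=\sum_i|Z_i+\sqrt{g_jp_j(\mathsf{X})}|^2$ (non-central $\chi^2$ under $P_{\mathbb{Y}|\mathbb{X}=\mathsf{X}}$); these are identical for all $\mathsf{X}\in F_n$ (handling your uniformity worry, including modes with $p_j=0$), and on the concentration shell $r=cn$, $c$ near $1+g_jp_j(\mathsf{X})$, the density ratio $f(r)=q_{T_j}/q_{U_j}$ is bounded by a constant $C_1$ via Prokhorov's upper bound on the modified Bessel function $I_{\nu}$. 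Only then does the change of measure $\mathbb{P}_{\mathbf{U}}[A]=\int_A f(\mathbf{r})^{-1}q_{\mathbf{T}}\,d\mathbf{r}\ge\tau/C_1$ go through. Without this (or an equivalent device) the constant $C_1$ in (\ref{eq35}) and (\ref{eq1}) is not produced, so the proposal does not yet prove the theorem.
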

The proof of (\ref{eq1}) can be found below.\par
\begin{proof}

To apply $\kappa\beta$ bound to the investigated AmBC channel model, we need to complete three steps: the first step is to choose the auxiliary output distribution $P_{\mathbb{Y}}$, the second step is to compute $\beta_{\alpha}$ and the third step is to compute $\kappa_{\tau}$.
\subsection{Choosing The Output Distribution}
$P_{\mathbb{Y}}$ can be expressed as a product of the distributions as
\begin{equation}
    P_{\mathbb{Y}} = \prod_{i=1}^{n}\prod_{j=1}^{m} P_{Y_{j,i}}.
\end{equation}\par
First, without loss of generality, to simplify calculation, we choose $\Re\{X_{j,i}\}=\Im\{X_{j,i}\}=\sqrt{P/2m}$. Therefore $Y_{j,i}$ can be expressed as
\begin{equation}
    Y_{j,i} = X_{j,i}(h_{sr}+h_{sg}h_{gr}Ad)+w,
\end{equation}
where $h_{sr}$, $h_{sg}$ and $h_{gr}$ are the element of $\mathbb{H}_{sr}$, $\mathbb{H}_{sg}$ and $\mathbb{H}_{gr}$ respectively and $w$ is an element of $\mathbb{W}$.\par

In the following, we separately analyze the real and imaginary parts of $Y_{j,i}$.

For the real part of $Y_{j,i}$, due to the property of the characteristic function, we have,
\begin{equation}
    \phi_{Y_{j,i}}(t) = \mathbb{E} [\exp\{it\Big(\sqrt{\frac{P}{2m}}(h_{sr}+h_{sg}h_{gr}Ad)+w\Big)\}]
    =e^{-\frac{Pt^2}{8}}e^{-\frac{t^2}{4}}\mathbb{E} [\exp\{it\sqrt{\frac{P}{2m}}h_{sg}h_{gr}Ad\}]\label{eq45}.
\end{equation}\par
Now we need to calculate the characteristic function of $h_{sg}h_{gr}$. At first, we need to derive the PDF of a product of two standard Gaussian distributed variables which is\cite{b28}
\begin{equation}\label{eq46aa}
    q_{1}(x)=\frac{2K_0(2|x|)}{\pi},
\end{equation}
where $K_0(\cdot)$ denotes a modified Bessel function of the second kind.\par
Then we can obtain the characteristic function of (\ref{eq46aa}) \cite{b29}
\begin{equation}
    \phi_{1}(t)=\int_{-\infty}^{\infty} e^{itx}q_{1}(x) dx
    =\int_{-\infty}^{\infty} e^{itx}\frac{2K_0(2|x|)}{\pi}dx
    =\frac{2}{\sqrt{(t^2+4)}}.
\end{equation}
\par
Thus we can obtain the characteristic function of $h_{sg}h_{gr}$
\begin{equation}
    \phi_{h_{sg}h_{gr}}(t)=\phi_{1}(t)\phi_{1}(-t)
    =\frac{4}{t^2+4}.\label{eq46}
\end{equation}\par
Then,
\begin{equation}
    \phi_{\sqrt{\frac{P}{2m}}h_{sg}h_{gr}Ad}(t)=\phi_{h_{sg}h_{gr}}(\sqrt{P/2m}Adt)
    =e^{-log(\frac{P}{8m}A^2d^2t^2+1)},
\end{equation}\par
and,
\begin{align}
    &\phi_{Y_{j,i}}(t)=e^{-\frac{Pt^2}{8}}e^{-\frac{t^2}{4}}e^{-\log(\frac{P}{8m}A^2d^2t^2+1)}\\
    &=e^{-\frac{(P+2)t^2}{8}}e^{-[\log(t_0)+\frac{2t_0}{1+t_0^2}(t-t_0)+\frac{2-2t_0^2}{(1+t_0^2)^2}(t-t_0)^2+\mathcal{O}(t^3)]}\label{eq47}\\
    &\sim e^{-\frac{1}{2}\sigma_{Y_{j,i}}^2t^2},
\end{align}
where (\ref{eq47}) comes from the Taylor expansion at $t=t_0$.\par

For the imaginary part of $Y_{j,i}$, it is the same as the real part. Thus ${\mathbb{Y}}$ is a random matrix whose $(j,i)$th entry follows a complex Gaussian distribution with variances being determined by $\sigma_{j}$,
\begin{equation}
    P_{Y_{j,i}} = \mathcal{CN}(0,\sigma_{j}^{2}\mathbf{I}_m),
\end{equation}
where $\sigma_{j}^{2}$ will be chosen later.
\subsection{Compute $\beta_{\alpha}$}
Due to the water-filling strategy, the constraint set can be found directly in (\ref{eq2}). The information density is shown below \cite{b6}:
\begin{equation}
    i(\mathbb{X}, \mathbb{Y}) = \sum_{j=1}^{m}\log{\frac{dP_{\mathbb{Y}|\mathbb{X}=\mathsf{X}}}{dP_{\mathbb{Y}}}(\mathbf{y})} = \sum_{j=1}^{m}\Big(\frac{n}{2}\log{\sigma_{j}^2} + \frac{1}{2}\sum_{i=1}^{n}\Big[\frac{y_i^2}{\sigma_{j}^2}-(y_i-\sqrt{g_jp_j(\mathsf{X})})^2\Big]\Big)
\end{equation}\par
It is convenient to define independent standard Complex Gaussian variables $Z_i\sim\mathcal{CN}(0,1)$. Then under $P_{\mathbb{Y}}$, the distribution of the information density is:
\begin{equation}
    G_n = \sum_{j=1}^{m}\Big(n\log{\sigma_{j}}+\frac{1}{2}\sum_{i=1}^{n}\Big((1-\sigma_{j}^2)|Z_i|^2+
    2\sqrt{g_jp_j(\mathsf{X})}\sigma_{j}|Z_i|-\frac{g_jp_j(\mathsf{X})}{2}\Big)\Big)
\end{equation}
and under $P_{\mathbb{Y}|\mathbf{X}=\mathbf{x}}$, the distribution is:
\begin{equation}
    H_n =\sum_{j=1}^{m}\Big(n\log{\sigma_{j}}+\frac{1}{2\sigma_{j}^2}\sum_{i=1}^{n}\Big((1-\sigma_{j}^2)|Z_i|^2+
    2\sqrt{g_jp_j(\mathsf{X})}|Z_i|+g_jp_j(\mathsf{X})\Big)\Big)
\end{equation}\par
The next step is to choose $\sigma_{j}^2$ which can minimize $\beta_\alpha$. Note that due to the equations $\beta_\alpha=\exp\{-D(P_{\mathbb{Y}|\mathbb{X}=\mathsf{X}}\parallel P_{\mathbb{Y}})\}$ and $D(P_{\mathbb{Y}|\mathbb{X}=\mathsf{X}}\parallel P_{\mathbb{Y}}) = \mathbb{E}[H_n]$, the minimization problem can be modified to the maximization of $\mathbb{E}[H_n]$. Then we have
\begin{equation}\label{eq16}
    \sigma_{j}^2 = 1+g_jp_j(\mathsf{X}),\quad  j=1,\dots,m.
\end{equation}
The proof of the above equation is shown in Appendix \ref{A}. With this choice of $\sigma_{j}^2$, we can get (\ref{eq3}) and (\ref{eq4}).

\subsection{Computing $\kappa_{\tau}$}
To compute $\kappa_\tau$, we need the same aforementioned power constraint set. We denote $\mathbf{U}$ as $P_{\mathbb{Y}}$ on $\mathbb{R}_{+}^{m}$,
\begin{equation}
    U_j = \sum_{i=1}^{n} (1+g_jp_j(\mathsf{X}))|Z_i|^2,
\end{equation}
where $U_j$ denotes the $j$th entry of $\mathbf{U}$, and $Z_i$ represents the $i$th i.i.d $\mathcal{CN}(0,1)$ distributed random variable. Given $\mathbf{p}$ and $\mathbf{g}$, the random variable $U_j$ is Gamma distributed. Thus its PDF $q_{U_j|\mathbf{p},\mathbf{g}}$ is
\begin{equation}\label{eq5}
    q_{U_j|\mathbf{p},\mathbf{g}}(r|p_j(\mathsf{X}),g_j)
    = \frac{2^n}{(1+g_jp_j(\mathsf{X}))^{n}\Gamma(n)}r^{n-1}\exp{(-\frac{2r}{1+g_jp_j(\mathsf{X})})}.
\end{equation}\par
Moreover, we denote $\mathbf{T}$ as $P_{\mathbb{Y}|\mathbb{X}=\mathsf{X}}$ on $\mathbb{R}_{+}^{m}$,
\begin{equation}
    T_j = \sum_{i=1}^{n} |Z_i+\sqrt{g_jp_j(\mathsf{X})}|^2, 
\end{equation}
where $T_j$ denotes the $j$th entry of $\mathbf{T}$. 
Given $\mathbf{p}$ and $\mathbf{g}$, the random variable $T_j$ is the sum of a non-central $\chi^2$ distributed random variable and a Gamma distributed random variable which can be expressed as:
\begin{equation}
    T_j = \frac{1}{4}K_j + J_j, 
\end{equation}
where $K_j\sim \chi'^2_n(4ng_jp_j(\mathsf{X}))$ and $J_j\sim \Gamma(n/2,1/2)$.\par
The characteristic function of $T_j$ can be written as
\begin{align}
    \phi_{T_j}(x)&=\phi_{\frac{1}{4}K_j+J_j}(x)=\mathbb{E}[\exp\{ix(\frac{1}{4}K_j+J_j)\}]\\  \nonumber 
    &=\mathbb{E}[\exp\{ix(\frac{1}{4}K_j)\}]\mathbb{E}[\exp\{ix(J_j)\}]\\\nonumber 
    &=\bigg[\frac{\exp\Big\{\frac{i4ng_jp_j(\mathsf{X})x}{1-2ig_jp_j(\mathsf{X})}\Big\}}{(1-2ix)^{n/2}}\bigg]^{1/4}\frac{1}{(1-i\frac{1}{2}x)^{n/2}}\\  \nonumber 
    &=\frac{\exp\Big\{\frac{ing_jp_j(\mathsf{X})x}{2-i2x}\Big\}}{(1-i2x)^{n/8}(1-i\frac{1}{2}x)^{n/2}}\sim \frac{\exp\Big\{\frac{ing_jp_j(\mathsf{X})x}{2-i2x}\Big\}}{(1-i2x)^{5n/8}}\\  \nonumber 
    &\sim \chi'^2_{5n/8}(ng_jp_j(\mathsf{X}))  \nonumber 
\end{align}\par
Thus the random variable $T_j$ obeys non-central $\chi^2$ distribution and its PDF $q_{T_j|\mathbf{p},\mathbf{g}}$ is
\begin{equation}\label{eq6}
    q_{T_j|\mathbf{p},\mathbf{g}}(r|p_j(\mathsf{X}),g_j)
    = \frac{1}{2}e^{-\frac{r+ng_jp_j(\mathsf{X})}{2}}(\frac{r}{ng_jp_j(\mathsf{X})})^{5n/32-1/2}I_{5n/16-1}(\sqrt{rng_jp_j(\mathsf{X})}),
\end{equation}
where $I_\alpha(y)$ is a modified Bessel function of a first kind.\par
We define,
\begin{equation}\label{eq7}
    f(r)\overset{\Delta}{=}\frac{ q_{T_j|\mathbf{p},\mathbf{g}}(r|p_j(\mathsf{X}),g_j)}{q_{U_j|\mathbf{p},\mathbf{g}}(r|p_j(\mathsf{X}),g_j)}.
\end{equation}\par
Substituting (\ref{eq5}) and (\ref{eq6}) into (\ref{eq7}), we have,
\begin{multline}\label{eq8}
    f(r) = (\frac{1}{2})^{n+1}(1+g_jp_j(\mathsf{X}))^n\Gamma(n)
    r^{-27n/32+1/2}(ng_jp_j(\mathsf{X}))^{-5n/32+1/2}\times\\
    I_{5n/16-1}(\sqrt{rng_jp_j(\mathsf{X})})\times\exp{\{\frac{1-g_jp_j(\mathsf{X})}{2(1+g_jp_j(\mathsf{X}))}r-\frac{ng_jp_j(\mathsf{X})}{2}\}}.
\end{multline}\par
We only consider the case where $5n/16$ is even, and if $5n/16$ is odd, we can replace $I_{5n/16-1}$ with $I_{5n/16-3/2}$. Thus the upper bound of modified Bessel function of its first kind $I_{5n/16-1}(x)$ is given by \cite{b26}:
\begin{multline}\label{eq9}
    I_{5n/16-1}(x)\leq \sqrt{\frac{\pi}{8x}}e^{x}(1+\frac{(5n/16-1)^2}{x^2})^{-1/4}\times\\
    \exp\Big\{-(5n/16-1)\sinh^{-1}{(\frac{5n/16-1}{x})}+x\Big(\sqrt{1+\frac{(5n/16-1)^2}{x^2}}-1\Big)\Big\}.
\end{multline}\par
Substituting $\sinh^{-1}(x)=\log(x+\sqrt{1+x^2})$ into (\ref{eq9}), we can obtain
\begin{multline}\label{eq38bb}
    I_{5n/16-1}(x)\leq
     \sqrt{\frac{\pi}{8}}\Big(x\sqrt{(1+\frac{(5n/16-1)^2}{x^2})}\Big)^{-1/2}\times\\\exp\Big\{-(5n/16-1)\log(\frac{5n/16-1}{x}+
    \sqrt{1+(\frac{5n/16-1}{x})^2})+x\Big(\sqrt{1+\frac{(5n/16-1)^2}{x^2}}\Big)\Big\}.
\end{multline}\par
Here we only consider that $r$ has the same order as $n$, which is $r=\mathcal{O}(n)$. Thus we set:
\begin{equation}\label{eq39}
    r=cn
\end{equation}
where $c\in[1+g_jp_j(\mathsf{X})-\delta, 1+g_jp_j(\mathsf{X})+\delta]$ for $\delta>0$.\par
For large $n$, we have
\begin{align}
    \sqrt{cn^2g_jp_j(\mathsf{X})}\sqrt{(1+\frac{(5n/16-1)^2}{cn^2g_jp_j(\mathsf{X})})} 
    &=\sqrt{cn^2g_jp_j(\mathsf{X})}\sqrt{1+\frac{25}{256cg_jp_j(\mathsf{X})}-\frac{5}{8cng_jp_j(\mathsf{X})}+\frac{1}{cn^2g_jp_j(\mathsf{X})}}\\
    &\sim \frac{n}{16}\sqrt{25+256cg_jp_j(\mathsf{X})}.\label{eq38aa}
\end{align}\par
Thus, putting (\ref{eq38aa}) into (\ref{eq38bb}), and after some algebraic manipulations, we obtain
\begin{multline}\label{eq38}
    I_{5n/16-1}(n\sqrt{cg_jp_j(\mathsf{X})})
    \leq\sqrt{\frac{\pi}{8}} \Big(\frac{n}{16}\sqrt{25+256cg_jp_j(\mathsf{X})}\Big)^{-1/2}\\ \exp\Big\{-(\frac{5n}{16}-1)\log\Big(\frac{\frac{5n}{16}-1+\frac{n}{16}\sqrt{25+256cg_jp_j(\mathsf{X})}}{n\sqrt{cg_jp_j(\mathsf{X})}}\Big)
    +\frac{n}{16}\sqrt{25+256cg_jp_j(\mathsf{X})}\Big\}.
\end{multline}\par
Using (\ref{eq8}) and (\ref{eq9}) and the expression of the Gamma function
\begin{equation}
    \log{\Gamma(n)} = n\log{n}-n-\frac{1}{2}\log{n}+\mathcal{O}(1),
\end{equation} 
$f(cn)$ can then be upper bounded by
\begin{equation}
    f(cn)\leq\exp\{-A(n,g_j,p_j(\mathsf{X}),c)+\mathcal{O}(1)\}.
\end{equation}
Here
\begin{multline}
    A(n,g_j,p_j(\mathsf{X}),c) = (\frac{5n}{16}-1)\log(5+\sqrt{25+256cg_jp_j(\mathsf{X})})-
    \frac{n\sqrt{25+256cg_jp_j(\mathsf{X})}}{16}+\frac{\log(\sqrt{25+256cg_jp_j(\mathsf{X})})}{2}-\\\frac{nc(1-g_jp_j(\mathsf{X}))}{2(1+g_jp_j(\mathsf{X}))}
    -(\frac{n}{4}-\frac{9}{2})\log{2}+(\frac{5n}{16}-1)\log{n}+(\frac{27n}{32}-\frac{1}{2})\log{c}-n\log{(1+g_jp_j(\mathsf{X}))}
    +\\(\frac{5n}{32}-\frac{1}{2})\log{g_jp_j(\mathsf{X})}+n(\frac{g_jp_j(\mathsf{X})}{2}+1)-\frac{1}{2}\log{\pi}.
\end{multline}\par
Since $c\in[1+g_jp_j(\mathsf{X})-\delta, 1+g_jp_j(\mathsf{X})+\delta]$ for $\delta>0$, we can estimate the range of $A(n,g_j,p_j(\mathsf{X}),c)$, which is
\begin{equation}
    0<A(n,g_j,p_j(\mathsf{X}),c)<2n.
\end{equation}\par
Thus, we can obtain the minimum value of $A(n,g_j,p_j(\mathsf{X}),c)$,
\begin{equation}
    A_{min}(n,g_j,p_j(\mathsf{X}),c)\geq 0.
\end{equation}\par
Therefore,  
\begin{equation}
    f(cn)\leq \exp\{-A_{min}(n,g_j,p_j(\mathsf{X}),c)+\mathcal{O}(1)\}=C_1.
\end{equation}\par
Without the loss of generality, we can assume a set $A$ where $\mathbb{P}_{\mathbf{T}}[A]\geq\tau$, then we reach the conclusion
\begin{align}
    \kappa_\tau &= \mathbb{P}_{\mathbf{U}}[A]= \int_{A}q_{\mathbf{U}|\mathbf{p},\mathbf{g}}(\mathbf{r}|\mathbf{p},\mathbf{g})d\mathbf{r}=\int_{A}\frac{q_{\mathbf{U}|\mathbf{p},\mathbf{g}}(\mathbf{r}|\mathbf{p},\mathbf{g})}{q_{\mathbf{T}|\mathbf{p},\mathbf{g}}(\mathbf{r}|\mathbf{p},\mathbf{g})}q_{\mathbf{T}|\mathbf{p},\mathbf{g}}(\mathbf{r}|\mathbf{p},\mathbf{g})d\mathbf{r}\nonumber\\
    &=\int_{A}\frac{1}{f(r)}q_{\mathbf{T}|\mathbf{p},\mathbf{g}}(\mathbf{r}|\mathbf{p},\mathbf{g})d\mathbf{r}\geq\frac{1}{C_1}\tau.
\end{align}
\end{proof}
\begin{remark}
    In practice, the numerical evaluation of the numerator of the right-hand side of (\ref{eq1}) is rather difficult. Inspired by Polyanskiy\cite{b6}\cite{b15}, finding an approximation of $\kappa_{\tau}$ is necessary. After doing so, the achievability bound can be computed and analyzed numerically. 
\end{remark}
\begin{remark}
    For the denominator of the right-hand side of (\ref{eq1}), the best way is the Monte Carlo technique. In our case, we normally set the repeat sampling to $10^5$ which can lead to an accurate result. 
\end{remark}
\section{Converse Bound}\label{con}
In this section, our converse bound is based on the meta-converse theorem\cite{b6}. Different from the aforementioned achievability bound, for the converse bound, we apply the power constraint of each codeword with equality. According to \cite{b22}, this power constraint is commonly used in the multiple antenna systems and in multiple access channels as well. For the convenience of further proof and analysis, we still use the eigenvalues to estimate each channel.
\begin{theorem}
    Let $g_1\geq\dots\geq g_m$ be the $m$ largest eigenvalues of channel matrix.
    \begin{equation}
    \mathbf{g}=\left\{
    \begin{array}{lr}
    \lambda_{max}\Big((\mathbb{H}_0-\mathbb{H}_1)^H(\mathbb{H}_0-\mathbb{H}_1)\Big),& d=-1,\\
    \lambda_{max}\Big((\mathbb{H}_0+\mathbb{H}_1)^H(\mathbb{H}_0+\mathbb{H}_1)\Big),& d=1,
    \end{array}
    \right.  
    \end{equation} 
    We consider that each codeword $\mathsf{X}$ satisfies the equal power constraint, and each $\mathsf{X}$ belongs to the set:
    \begin{equation}
        F'_n \overset{\Delta}{=} \{\mathsf{X}:\sum_{j=1}^m p_j(\mathsf{X})=P\}.
    \end{equation}\par
    For each codeword $\mathsf{X}\in F_n$, we have its corresponding power allocation vector, 
    \begin{equation}
        \mathbf{p}(\mathsf{X})\in\mathbb{R}^{m}: p_j(\mathsf{X})=\frac{1}{n}\norm{\mathsf{X}_{j,\cdot}}^2.
    \end{equation}
    There exists a constant $K>0$ such that for any $(M,\epsilon)$ code the maximal probability of error $\epsilon'$ over an auxiliary channel $Q$ satisfies
    \begin{equation}\label{eq10}
        1-\epsilon'\leq\frac{Kmn}{M}.
    \end{equation}\par
    Using $\beta_\alpha$ from Theorem. \ref{theorem1}, we can obtain the relationship between $\beta_\alpha$ and $\epsilon'$, 
    \begin{equation}\label{eq11}
        \inf_{\mathbf{p}(\cdot)}\beta_{1-\epsilon}(P_{\mathbb{Y}|\mathbb{X}=\mathsf{X}},Q_{\mathbb{Y}|\mathbb{X}=\mathsf{X}})\leq 1-\epsilon'.
    \end{equation}\par
    Combining (\ref{eq10}) and (\ref{eq11}), we can get the converse bound:
    \begin{equation}\label{eq12}
    R(n,\epsilon|d)\leq \frac{1}{n}\log\frac{Kmn}{\inf_{\mathbf{p}(\cdot)}\beta_{1-\epsilon}(P_{\mathbb{Y}|\mathbb{X}=\mathsf{X}},Q_{\mathbb{Y}|\mathbb{X}=\mathsf{X}}|d)}.
    \end{equation}\par
    Due to the conditional probability, we have
    \begin{equation}
        R(n,\epsilon)\leq\\ \frac{\mathbb{P}[d=-1]}{n}\log\frac{Kmn}{\inf_{\mathbf{p}(\cdot)}\beta_{1-\epsilon}(P_{\mathbb{Y}|\mathbb{X}=\mathsf{X}},Q_{\mathbb{Y}|\mathbb{X}=\mathsf{X}}|d=-1)}+\frac{\mathbb{P}[d=1]}{n}\log\frac{Kmn}{\inf_{\mathbf{p}(\cdot)}\beta_{1-\epsilon}(P_{\mathbb{Y}|\mathbb{X}=\mathsf{X}},Q_{\mathbb{Y}|\mathbb{X}=\mathsf{X}}|d=1)}.
    \end{equation}
\end{theorem}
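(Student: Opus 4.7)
The plan is to invoke the meta-converse theorem to transfer the problem from the true channel $P_{\mathbb{Y}|\mathbb{X}}$ to a tractable auxiliary channel $Q_{\mathbb{Y}|\mathbb{X}}$, and then to upper-bound the rate on $Q$ by a Lebesgue-volume argument. First I would perform an SVD on $\mathbb{H}_0\pm\mathbb{H}_1$ so that, conditional on $d$, the channel decomposes into $m$ parallel scalar Gaussian sub-channels with gains $g_1,\ldots,g_m$; this parameterizes the equal-power set $F_n'$ directly through the allocations $p_j(\mathsf{X})$ and makes the auxiliary output distribution take a product form on $\mathbb{R}_+^m$. The meta-converse then yields (\ref{eq11}) immediately: for any $(M,\epsilon)$ code whose maximal error under $Q$ is $\epsilon'$, one has $\inf_{\mathbf{p}(\cdot)}\beta_{1-\epsilon}(P_{\mathbb{Y}|\mathbb{X}=\mathsf{X}},Q_{\mathbb{Y}|\mathbb{X}=\mathsf{X}})\leq 1-\epsilon'$.

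Next I would choose $Q$ so that in each sub-channel the squared output norm is Gamma distributed, matching the $q_{U_j|\mathbf{p},\mathbf{g}}$ density already derived in (\ref{eq5}). The joint density $q_Q$ on $\mathbb{R}_+^m$ is then a product of $m$ Gamma densities, and by termwise Mellin transformation combined with the Mellin-convolution theorem the joint density admits a compact Meijer $G$-function representation. That representation is what lets me write an explicit pointwise ceiling $\sup_{\mathbf{r}} q_Q(\mathbf{r})$ depending on $\mathbf{g}$ and $\mathbf{p}$ but not on $n$. With this ceiling in hand, (\ref{eq10}) follows from a standard packing argument: the $M$ decoding regions under any $(M,\epsilon)$ code are disjoint subsets of $\mathbb{R}_+^m$, each carrying $Q$-mass at least $1-\epsilon'$, so
\begin{equation*}
M(1-\epsilon') \leq \sup_{\mathbf{r}} q_Q(\mathbf{r}) \cdot \mathrm{vol}(S_n),
\end{equation*}
where $S_n\subset\mathbb{R}_+^m$ is the effective support of the output under the equal-power constraint. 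A Meijer-$G$ integral estimate for $\mathrm{vol}(S_n)$ gives $\mathcal{O}(mn)$, and the product with the pointwise ceiling yields $1-\epsilon'\leq Kmn/M$ for some constant $K$ absorbing $\mathbf{g}$- and $P$-dependent factors.

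Combining (\ref{eq10}) with (\ref{eq11}) and taking the infimum over $\mathbf{p}(\cdot)$ gives $\inf_{\mathbf{p}(\cdot)}\beta_{1-\epsilon}\leq Kmn/M$; rearranging for $R(n,\epsilon|d)=(1/n)\log M$ then yields (\ref{eq12}), and averaging over the two values of $d$ via the stated conditional decomposition produces the final unconditional statement. The main obstacle I anticipate is the uniform Meijer-$G$ estimate: the asymptotic behavior of $G^{m,0}_{0,m}$ in both the blocklength $n$ and the eigenvalue vector $\mathbf{g}$ is delicate, and the constant $K$ must be made genuinely independent of $n$ or the $mn$ scaling in (\ref{eq10}) is lost. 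A secondary concern is verifying that restricting to a product-form auxiliary $Q$ does not make the infimum on the left of (\ref{eq11}) lossy; this should follow because on the diagonalized channel the optimal output law factorizes over the $m$ sub-channels by construction.
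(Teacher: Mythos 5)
Your proposal follows essentially the same route as the paper: the meta-converse applied to the product auxiliary channel $Q_{Y_j|\mathbb{X}=\mathsf{X}}=\mathcal{CN}(0,1+g_jp_j(\mathsf{X}))$ (so the output statistic $\mathbf{S}$ is Gamma-distributed componentwise), a Mellin-transform/Meijer-$G$ bound on that output density, and a disjoint-decoding-region Lebesgue-volume packing argument yielding $1-\epsilon'\leq Kmn/M$, then conditioning on $d$. The only cosmetic difference is bookkeeping: the paper puts the $\mathcal{O}(mn)$ factor into the pointwise density ceiling ($q_{\mathbf{S}|\mathbf{p}}\leq K_1mn$) and bounds the output volume by a constant $K_2$, whereas you assign the $n$-dependence to the volume and keep the ceiling $n$-free; the product, and hence the bound, is identical.
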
\par
The proof of (\ref{eq12}) can be found below.\par
\begin{proof}
    
We use the power allocation vector and the power constraint which can be found in Section \ref{ach}. 
The auxiliary channel $Q$ is defined as:
\begin{equation}
    Q_{\mathbb{Y}|\mathbb{X}=\mathsf{X}} \overset{\Delta}{=} \prod_{j=1}^{m}Q_{Y_{j}|\mathbb{X}=\mathsf{X}},
\end{equation}
where
\begin{equation}
    Q_{Y_{j}|\mathbb{X}=\mathsf{X}} = \mathcal{CN}(0,1+g_jp_{j}(\mathsf{X})).
\end{equation}\par
The output of the $Q_{\mathbb{Y}|\mathbb{X}=\mathsf{X}}$ channel is only dependent on $\mathbf{P}=\mathbf{p}(\mathbb{X})$. Let $\mathbf{S}=\mathbf{p}(\mathbb{Y})$ and its entries are the square of the norm of $\mathbb{Y}$ normalized by the blocklength $n$. It follows that $\mathbf{S}$ can express the statistics for the detection of $\mathbb{X}$ from $(\mathbb{Y},\mathbf{g})$. Thus we can define an equivalent channel $Q_{\mathbf{S}|\mathbf{P}}$ as 
\begin{equation}\label{eq49}
    S_j = \frac{1+g_jp_j(\mathsf{X})}{n}\sum_{i=1}^{n}|Z_{j,i}|^2,
\end{equation}
where $Z_{j,i}\sim\mathcal{CN}(0,1)$. Note that the random variable $S_j$ is Gamma distributed, and its PDF is given by
\begin{equation}
    q_{S_i|\mathbf{p}}(s_i) =
    \frac{(2n)^n}{(1+g_jp_j(\mathsf{X}))^n\Gamma(n)}s_i^{n-1}\exp\Big\{-\frac{2ns_i}{1+g_jp_j(\mathsf{X})}\Big\}.
\end{equation}\par
Moreover, as $Q_{\mathbf{S}|\mathbf{P}=\mathbf{p}}$ is a product of $m$ copies of the PDF of $S_i$ which is Gamma distribution. We can obtain the PDF of $Q_{\mathbf{S}|\mathbf{P}=\mathbf{p}}$ by the theorem shown below\cite{b28}.
\begin{theorem}\label{the2}
    Given $N$ independent gamma variables $x_i$ with the same shape parameter $k$ and the same scale parameter $\theta$ having density functions
    \begin{equation}\label{eq40}
        f_i(x_i)=\frac{1}{\Gamma(k)\theta^k}x_i^{k-1}e^{-\frac{x_i}{\theta}}.
    \end{equation}
    The probability density function $g(z)$ of the product $z=x_1x_2\dots x_N$ of $N$ independent gamma variables is a Meijer G-function multiplied by a normalising constant $\mathcal{K}$, 
    \begin{equation}
        g(z)=\mathcal{K}\MeijerG[\big]{N}{0}{0}{N}{}{k-1}{\frac{z}{\theta^N}},
    \end{equation}
    where
    \begin{equation}
        \mathcal{K}=(\frac{1}{\theta})^N\prod_{i=1}^{N}\frac{1}{\Gamma(k)},
    \end{equation}
    and
    \begin{equation}
        \MeijerG[\big]{m}{n}{p}{q}{j_1,j_2, \dots, j_p}{k_1,k_2, \dots, k_q}{z}
        =\frac{1}{2\pi i}\int_{c-i^\infty}^{c+i^\infty}z^{-s}\cdot\frac{\prod_{j=1}^{m}\Gamma(s+k_j)\cdot\prod_{j=1}^{n}\Gamma(1-j_j-s)}{\prod_{j=n+1}^{p}\Gamma(s+j_j)\cdot\prod_{j=m+1}^{q}\Gamma(1-k_j-s)}ds,
    \end{equation}
    where $c$ is a real constant defining a Bromwich path separating the poles of $\Gamma(s+k_j)$ from those of $\Gamma(1-j_k)-s$. 
\end{theorem}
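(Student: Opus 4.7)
The plan is to use the Mellin transform, which is the natural tool for products of independent positive random variables since it converts multiplicative convolution into ordinary multiplication. Recall that the Mellin transform of a density $f$ on $(0,\infty)$ is $\mathcal{M}\{f\}(s)=\int_0^\infty x^{s-1}f(x)\,dx$, and that for independent non-negative $x_1,\ldots,x_N$ with product $z=x_1\cdots x_N$ one has $\mathcal{M}\{g\}(s)=\prod_{i=1}^N \mathcal{M}\{f_i\}(s)$. So the strategy is: transform, multiply, invert, and recognise the resulting contour integral as a Meijer $G$-function.

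First I would compute the Mellin transform of a single gamma density (\ref{eq40}). Substituting $u=x_i/\theta$ into the defining integral and recognising the gamma function gives, in one line,
\begin{equation*}
\mathcal{M}\{f_i\}(s)=\frac{\theta^{s-1}}{\Gamma(k)}\,\Gamma(s+k-1).
\end{equation*}
Because the $x_i$ are independent with identical parameters, the Mellin transform of the product density $g$ is just the $N$-th power,
\begin{equation*}
\mathcal{M}\{g\}(s)=\frac{\theta^{N(s-1)}}{\Gamma(k)^N}\bigl[\Gamma(s+k-1)\bigr]^N=\frac{\theta^{-N}}{\Gamma(k)^N}\,\theta^{Ns}\prod_{j=1}^N\Gamma(s+k-1).
\end{equation*}

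Next I would apply the Mellin inversion formula, pull the $s$-independent prefactor out of the Bromwich integral, and bundle $\theta^{Ns}z^{-s}=(z/\theta^N)^{-s}$, obtaining
\begin{equation*}
g(z)=\frac{1}{\theta^N\Gamma(k)^N}\cdot\frac{1}{2\pi i}\int_{c-i\infty}^{c+i\infty}\Bigl(\frac{z}{\theta^N}\Bigr)^{\!-s}\prod_{j=1}^N\Gamma(s+k-1)\,ds,
\end{equation*}
where the real constant $c$ is chosen to the right of every pole of $\Gamma(s+k-1)^N$, i.e., $c>1-k$. Comparing with the Meijer $G$-function definition stated in the theorem, with parameters $m=N$, $n=0$, $p=0$, $q=N$, the three products $\prod_{j=1}^n$, $\prod_{j=n+1}^p$ and $\prod_{j=m+1}^q$ are all empty, leaving exactly $\prod_{j=1}^N\Gamma(s+k_j)$ with all lower parameters $k_j=k-1$. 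Hence the contour integral is $G^{N,0}_{0,N}\!\left(\,\cdot\,\big|\,z/\theta^N\right)$ with lower indices $k-1$, and the prefactor is precisely the stated normalising constant $\mathcal{K}=\theta^{-N}\Gamma(k)^{-N}$, completing the proof.

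The main analytic point to verify, and the only place where the argument is not pure bookkeeping, is the legitimacy of the Mellin inversion and of the factorisation $\mathcal{M}\{g\}=\prod_i\mathcal{M}\{f_i\}$. Both rest on (i) choosing the fundamental strip so that $c>1-k$ lies to the right of all gamma poles, and (ii) verifying sufficient decay of $|\Gamma(c+it+k-1)|^N$ on vertical lines to apply Fubini and shift contours; Stirling's asymptotic $|\Gamma(c+it)|\sim\sqrt{2\pi}|t|^{c-1/2}e^{-\pi|t|/2}$ provides exponential decay in $|t|$, which is more than adequate. Once these standard estimates are in hand, the identification of the integral with the Meijer $G$-function is immediate.
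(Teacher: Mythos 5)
Your proposal is correct and follows essentially the same route as the paper's proof: compute the Mellin transform of a single gamma density (the paper does this in two steps via the transform of $e^{-x/\theta}$ plus the shift property, you do it in one substitution), take the $N$-fold product, and invert the Bromwich integral to recognise the Meijer $G$-function with the stated constant $\mathcal{K}$. Your added remarks on the fundamental strip and the decay of $\Gamma$ on vertical lines supply justification the paper leaves implicit, but the argument is the same.
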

\begin{proof}
    Since the Mellin integral transform of $\exp{(-\frac{x}{\theta})}$ in (\ref{eq40}) is 
    \begin{equation}
        \mathcal{M}\{\exp{(-\frac{x}{\theta})}|s\}=\int_0^\infty x^{s-1}e^{(-\frac{x}{\theta})}dx
        =\theta^s\int_0^\infty x^{s-1}e^{-x}dx
        =\theta^s\Gamma(s),
    \end{equation}
    and
    \begin{equation}
        M\{x^kf(x)|s\}=M\{f(x)|s+k\},
    \end{equation}
    it follows that the probability density function of the gamma variable in (\ref{eq40}) has the Mellin transform
    \begin{equation}
        \mathcal{M}\{f_i(x_i)|s\}=\int_0^\infty x^{s-1} \frac{1}{\Gamma(k)\theta^k}x^{k-1}e^{-\frac{x}{\theta}}dx=\frac{\theta^{s-1}}{\Gamma(k)}\int_0^\infty x^{(k+s-1)-1}e^{-x}dx=\frac{\Gamma(k+s-1)}{\Gamma(k)}\theta^{s-1}.
    \end{equation}\par
    Then the Mellin integral transform of the probability density function $g(z)$ of the product of $N$ independent random gamma variables is
    \begin{equation}
        \mathcal{M}\{g(z)|s\}=\prod_{i=1}^{N}\mathcal{M}\{f_i(x_i)|s\}=\theta^{N(s-1)}\prod_{i=1}^{N}\frac{\Gamma(k+s-1)}{\Gamma(k)},
    \end{equation}
    and
    \begin{align}
        g(z)&=\frac{1}{2\pi i}\int_{c-i^\infty}^{c+i^\infty} z^{-s}\theta^{N(s-1)}\prod_{i=1}^{N}\frac{\Gamma(k+s-1)}{\Gamma(k)}ds\nonumber\\
        &=\frac{1}{2\pi i\theta^N}\int_{c-i^\infty}^{c+i^\infty} (\frac{z}{\theta^N})^{-s}\prod_{i=1}^{N}\frac{\Gamma(k+s-1)}{\Gamma(k)}ds\nonumber\\
        &=(\frac{1}{\theta})^N\prod_{i=1}^{N}\frac{1}{\Gamma(k)}\MeijerG[\big]{N}{0}{0}{N}{}{k-1}{\frac{z}{\theta^N}}.
    \end{align}\par
    Thus we complete the proof.
\end{proof}
In our case, due to the shape parameter $k=n$ and the scale parameter $\theta=\frac{1+g_jp_j(\mathsf{X})}{2n}$ and the number of copies $N=m$, applying Theorem. \ref{the2}, we can derive the probability density function of $q_{\mathbf{S}|\mathbf{p}}$.
\begin{equation}\label{eq41}
    q_{S|\mathbf{p}}(z)=\mathcal{K}\MeijerG[\big]{m}{0}{0}{m}{}{n-1}{z(\frac{2n}{1+g_jp_j(\mathsf{X})})^m},
\end{equation}
where
\begin{equation}
    \mathcal{K}=(\frac{2n}{1+g_jp_j(\mathsf{X})})^m\prod_{i=1}^{m}\frac{1}{\Gamma(n)},
\end{equation}
and
\begin{equation}
    \MeijerG[\big]{m}{0}{0}{m}{}{n-1}{z(\frac{2n}{1+g_jp_j(\mathsf{X})})^m}
    =\frac{1}{2\pi i}\int_{c-i^\infty}^{c+i^\infty}(z(\frac{2n}{1+g_jp_j(\mathsf{X})})^m)^{-s}\prod_{j=1}^{m}\Gamma(s+n-1)ds.
\end{equation}\par
The PDF of $Q_{\mathbf{S}|\mathbf{p}}$ can be upper bounded by
\begin{equation}\label{eq42}
    q_{\mathbf{S}|\mathbf{p}}\leq K_1mn,
\end{equation}
where $K_1$ is a constant.\par
Because $\mathbf{p}(\mathsf{X})$ is the power allocation vector,  $\mathbf{p}$ belongs to a certain ball in $\mathbb{R}^{m}$. Due to the definition of $\mathbf{S}$, it belongs to a slightly larger ball. By definition of Lebesgue measure\cite{b20}, we can obtain the relationship between $Leb(\mathbf{p})$ and $Leb(\mathbf{S})$ and the bounded Lebesgue measure, 
\begin{equation}\label{eq17}
    Leb(\mathbf{p})\leq Leb(\mathbf{S}) \leq \frac{K_2}{M},
\end{equation}
where $Leb$ is Lebesgue measure and $K_2$ is a constant. Then for each codeword, the decoding set must have a Lebesgue measure smaller than $\frac{K_2}{M}$.\par
We have the lower bounds for both density of auxiliary channel and Lebesgue measure. Thus we obtain
\begin{equation}
    1-\epsilon'\leq \frac{Kmn}{M},
\end{equation}
where
\begin{equation}\label{eq30}
     K = K_1K_2.
\end{equation}
\end{proof}

\section{Normal Approximation}\label{nor}

In this section, we perform the asymptotic analysis of the maximum achievable rate $R(n,\epsilon)$ for a given blocklength $n$. According to normal approximation refinement of the coding theorem by Polyanskiy\cite{b6}, we define the channel dispersion as follows.
\begin{definition}
    The channel dispersion $V$ (measured in squared information units per channel use) of a given channel with channel capacity $C$ is given by
    \begin{equation}
        V=\lim_{\epsilon\rightarrow 0}\lim_{n\rightarrow\infty}{\sup}\frac{1}{n}\Big(\frac{nC-nR(n,\epsilon)}{Q^{-1}(\epsilon)}\Big)^2.
    \end{equation}
\end{definition}
Here, for the investigated AmBC system, the channel capacity
\begin{equation}
    C=\lim_{\epsilon\rightarrow 0}\lim_{n\rightarrow\infty}R(n,\epsilon)=\lim_{\epsilon\rightarrow 0}\lim_{n\rightarrow\infty}\Big(\mathbb{P}[d=1]R(n,\epsilon|d=1)+\mathbb{P}[d=-1]R(n,\epsilon|d=-1)\Big)=\sum_{j=1}^m \log{(1+g_jp_j(\mathsf{X}))}
\end{equation}
and the channel dispersion
\begin{equation}
    V=\sum_{j=1}^m\frac{g_jp_j(\mathsf{X})(g_jp_j(\mathsf{X})+2)}{(g_jp_j(\mathsf{X})+1)^2}=m-\sum_{j=1}^m \frac{1}{(1+g_jp_j(\mathsf{X}))^2}.
\end{equation}\par
The maximum achievable rate can be expressed by
\begin{equation}\label{eq15}
    R(n,\epsilon)=C-\sqrt{\frac{V}{n}}Q^{-1}(\epsilon)+\mathcal{O}(\frac{\log{n}}{n}).
\end{equation}\par
The proof of (\ref{eq15}) is shown in Appendix \ref{Appendix C}.
\section{The Analysis of The Tag Signal}\label{tag}

In this section, we analyze the tag's performance in terms of the achievability and the converse bound. Because the error probability of the RF source and the tag are mutually dependent, the performance of the tag can be derived based on the achievablility and the converse bound for the RF source.\par
Given $\mathbb{X}=\mathsf{X}$, $\mathbb{H}=\mathsf{H}$ and $\mathbb{Y}=\mathsf{Y}$, with maximum-likelihood (ML) detection and the estimated RF source signal $\widetilde{\mathsf{X}}$ ,$d$ can be estimated by:
\begin{equation}
    \widetilde{d} = \mathop{\arg\min}_{d}\norm{\mathsf{Y}-\widetilde{\mathsf{X}}\mathsf{H}_0-\widetilde{\mathsf{X}}\mathsf{H}_1d}.
\end{equation}\par
According to \cite{b30}, the search space in the above ML detection grows exponentially as modulation size increases, which causes extremely high complexity. For reducing the complexity of ML detection, we can apply maximum ratio combining (MRC) to the received signal $\mathsf{Y}$. To obtain the $\widetilde{d}$, we let
\begin{align}
    Z&=\Re\{\frac{\mathsf{H}_1^H\widetilde{\mathsf{X}}^H}{\norm{\mathsf{H}_1}^2}(\mathsf{Y}-\widetilde{\mathsf{X}}\mathsf{H}_0)\}\nonumber\\
    &=\Re\{\frac{\mathsf{H}_1^H\widetilde{\mathsf{X}}^H\mathsf{X}\mathsf{H}_1}{\norm{\mathsf{H}_1}^2}\}d+\Re\{\frac{\mathsf{H}_1^H(\widetilde{\mathsf{X}}^H\mathsf{X}-\widetilde{\mathsf{X}}^H\widetilde{\mathsf{X}})\mathsf{H}_0}{\norm{\mathsf{H}_1}^2}\}+\Re\{\frac{\mathsf{H}_1^H\widetilde{\mathsf{X}}^H\mathsf{W}}{\norm{\mathsf{H}_1}^2}\}\nonumber\\
    &=\Re\{\frac{\mathsf{H}_1^H\widetilde{\mathsf{X}}^H\mathsf{X}\mathsf{H}_1}{\norm{\mathsf{H}_1}^2}\}d+\Re\{\frac{\mathsf{H}_1^H(\widetilde{\mathsf{X}}^H\mathsf{X}-1)\mathsf{H}_0}{\norm{\mathsf{H}_1}^2}\}+\omega,
\end{align}
where $\omega\sim\mathcal{N}(0,1/(2\norm{\mathsf{H}_1}^2))$.\par
Due to ML detection, if $Z>0$, $\widetilde{d}=1$, otherwise, $\widetilde{d}=-1$.
\begin{enumerate}
    \item Case 1: when $\widetilde{\mathsf{X}}=\mathsf{X}$, we have
    \begin{equation}
        Z=d+\omega.
    \end{equation}
    It shows that $Z$ obeys normal Gaussian distribution. Therefore, under $\widetilde{\mathsf{X}}=\mathsf{X}$, the error probability of the event $\widetilde{d}\neq d$ can be expressed as below
    \begin{equation}
        \mathbb{P}[\widetilde{d}\neq d|\widetilde{\mathsf{X}}=\mathsf{X}]=(1-\epsilon)Q(\sqrt{2}\norm{\mathsf{H}_1}),
    \end{equation}
    where $Q(x)$ represents the Gaussian Q-function
    \begin{equation}
        Q(x)\overset{\Delta}{=}\int_{x}^{\infty}\frac{1}{\sqrt{2\pi}}e^{-t^2/2}dt.
    \end{equation}
    \item Case 2: when $\widetilde{\mathsf{X}}\neq\mathsf{X}$, we have
    \begin{equation}
        Z=-d+\Re\{\frac{\mathsf{H}_1^H(\widetilde{\mathsf{X}}^H\mathsf{X}-1)\mathsf{H}_0}{\norm{\mathsf{H}_1}^2}\}+\omega=-d-2\Re\{\frac{\mathsf{H}_1^H\mathsf{H}_0}{\norm{\mathsf{H}_1}^2}\}+\omega.
    \end{equation}
    Thus we can obtain
    \begin{equation}
        \mathbb{P}[\widetilde{d}
        \neq d|\widetilde{\mathsf{X}}\neq\mathsf{X}]\\=\epsilon\Big[\frac{1}{2}Q\Big(\sqrt{2}\norm{\mathsf{H}_1}(-1+2\Re\{\frac{\mathsf{H}_1^H\mathsf{H}_0}{\norm{\mathsf{H}_1}^2}\})\Big)
        +\frac{1}{2}Q\Big(\sqrt{2}\norm{\mathsf{H}_1}(-1-2\Re\{\frac{\mathsf{H}_1^H\mathsf{H}_0}{\norm{\mathsf{H}_1}^2}\})\Big)\Big].
    \end{equation}
\end{enumerate}\par
Once getting the error probability of the tag signal for both cases of $\widetilde{\mathsf{X}}=\mathsf{X}$ and $\widetilde{\mathsf{X}}\neq\mathsf{X}$, we have the equation which expresses the relationship between $\epsilon$ and the error probability of the tag signal, $\epsilon_d$.
\begin{equation}\label{eq48}
    \epsilon_d 
    = (1-\epsilon)Q(\sqrt{2}\norm{\mathsf{H}_1})\\+\epsilon\Big[\frac{1}{2}Q\Big(\sqrt{2}\norm{\mathsf{H}_1}(-1+2\Re\{\frac{\mathsf{H}_1^H\mathsf{H}_0}{\norm{\mathsf{H}_1}^2}\})\Big)
        +\frac{1}{2}Q\Big(\sqrt{2}\norm{\mathsf{H}_1}(-1-2\Re\{\frac{\mathsf{H}_1^H\mathsf{H}_0}{\norm{\mathsf{H}_1}^2}\})\Big)\Big].
\end{equation}\par
Therefore, we can insert (\ref{eq48}) into our achievability bound in Section \ref{ach} and the converse bound in Section \ref{con} to analyze the performance of the tag signal.
\section{Numerical Results}\label{num}

\begin{figure}[!t]
    \centering
    \includegraphics[width=4.5in]{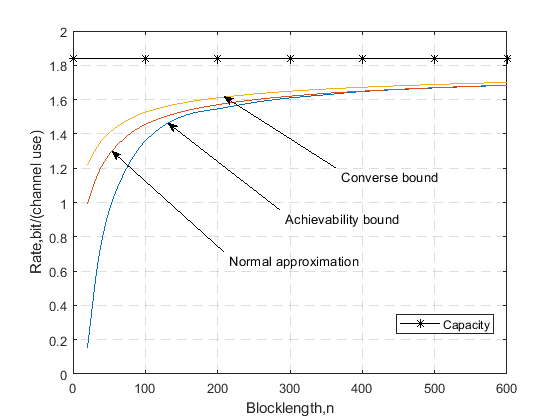}

    \caption{Achievability and converse bounds for $(n, M, \epsilon)$ codes over a MIMO ambient backscatter system model with Rayleigh fading channel and two transmit antennas and three receive antennas, SNR=$0$dB and $\epsilon=10^{-3}$.}
    \label{fig1}
\end{figure}

\begin{figure}[!t]
    \centering
    \includegraphics[width=4.5in]{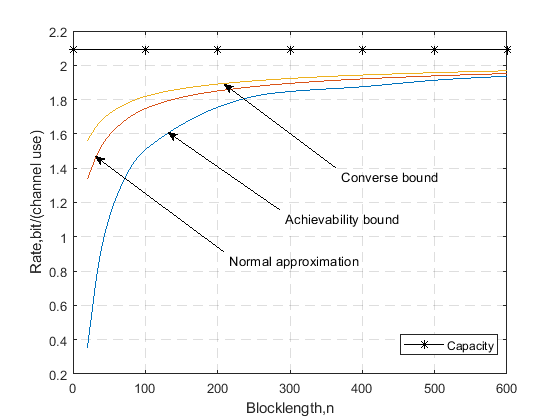}
    \caption{Achievability and converse bounds for $(n, M, \epsilon)$ codes over a MIMO ambient backscatter system model with Rician fading channel with $K$-factor equal to $10dB$ and two transmit antennas and three receive antennas, SNR=$0$dB and $\epsilon=10^{-3}$.}
    \label{fig2}
\end{figure}
\begin{figure}[!t]
    \centering
    \includegraphics[width=4.5in]{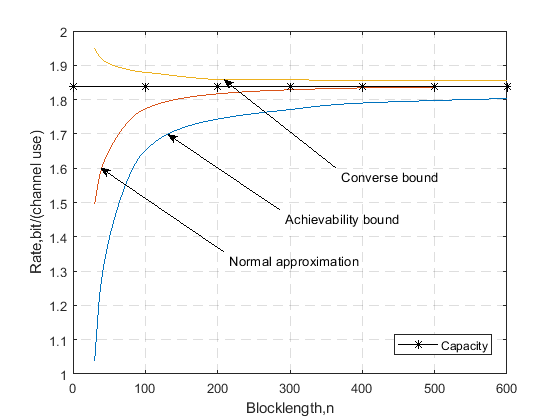}

    \caption{Achievability and converse bounds for $(n, M, \epsilon)$ codes over a MIMO ambient backscatter system model with Rayleigh fading channel and two transmit antennas and three receive antennas, SNR=$0$dB and the error probability of the tag signal $\epsilon_d=10^{-3}$ derived from (\ref{eq48}).}
    \label{fig3}
\end{figure}

In this section, we evaluate the bounds developed in Section \ref{ach} and Section \ref{con}. We consider an AmBC system consisting of a RF source with two transmit antennas, a batteryless tag with a single antenna and a receiver with three receive antennas. We assume all the channels, i.e., the channels between the source, the tag and the receiver, are independent and set the coefficient $A=0.5$, average error probability $\epsilon=10^{-3}$ and power $P=0$ dB. Fig. \ref{fig1} shows the numerical results of the derived bounds, the normal approximation and the Shannon capacity by assuming that all the channels are Rayleigh distributed.  From Fig. \ref{fig1}, we can see that $C=1.83$ bit/(channel use), and the blocklength required to achieve above $90\%$ of the capacity starts at $n=300$.\par

In Fig. \ref{fig2}, we change the channels from Rayleigh distributions to Rician distributions with $K$-factor equal to $10$dB and keep the same setting for other parameters as Fig. \ref{fig1}. We can see from Fig. \ref{fig2} that the blocklength required to achieve above $90\%$ of the capacity, i.e., $2.09$ bit/(channel use), starts at the blocklength $n=200$.\par

Fig. \ref{fig3} shows the results for the tag signals with the same setting as Fig. \ref{fig1} except that we fix the parameter $\epsilon_d=10^{-3}$ (please refer to (\ref{eq48}) for details). We observe that the gap between the channel capacity and the normal approximation is much smaller than the one presented in Fig. \ref{fig1}. When the blocklength is larger than $400$, the normal approximation can nearly reach the capacity. Fig. \ref{fig3} also implies that the convergence is much faster than Fig. \ref{fig1}.

\section{Conclusion}\label{conclu}
In this paper, we have established achievability and converse bounds on the maximal achievable rate $R(n,\epsilon)$ for a given blocklength $n$ and an average error probability $\epsilon$ in a multiple antenna backscatter system. We have also derived a normal approximation involving the channel dispersion and the channel capacity.\par
The analytical results  demonstrated that, as the blocklength $n$ increases, the channel dispersion $V$ implies rapid convergence to the channel capacity $C$. This suggests that the channel capacity is a valid performance metric for communication systems with stringent latency constraints operating over our system model. We have further developed an easy way to evaluate our approximation of $R(n, \epsilon)$ and demonstrated its accuracy using the achievability and the converse bounds. In our future work, we will consider using different coding schemes, such as \cite{distributedRaptor,Raptor_ML,JNCC,RCRC,codedcpm1,codedcpm2,codedcpm3}, in the AmBC system for applications in e.g., IoT sensor networks \cite{IoT_FD,WRN}, cellular networks \cite{cellular1,cellular2,cellular3,MIMO_capacity,network_capacity} and UAV networks \cite{UAVdownlink,UAV_THz,UAV_2} and derive corresponding bounds.


%

\appendices

\section{proof of (\ref{eq16})}\label{A}
In order to maximize $\mathbb{E}[H_n]$, we first need to obtain the exact expression of $\mathbb{E}[H_n]$ which is shown below,
\begin{align}
    \mathbb{E}[H_n] &= D(P_{\mathbb{Y}|\mathbb{X}=\mathsf{X}}\parallel P_{\mathbb{Y}})= D\Big(\mathcal{CN}(\sqrt{\mathbf{g}\mathbf{p}},\mathbf{I}_{m})\parallel \mathcal{CN}(0,\sigma_j^2\mathbf{I}_{m})\Big)\\
    &=\log{\det{\sigma_j^2\mathbf{I}_{m}}}-\log{\det{\mathbf{I}_{m}}}+(\sqrt{\mathbf{g}\mathbf{p}}-0)^H(\sigma_j^2\mathbf{I}_{m})^{-1}(\sqrt{\mathbf{g}\mathbf{p}}-0)+tr\Big((\sigma_j^2\mathbf{I}_{m})^{-1}(\mathbf{I}_{m})-\mathbf{I}_{m}\Big)\\
    &=m\log{\sigma_j^2}+\frac{\mathbf{g}\mathbf{p}}{\sigma_j^2}+m(\frac{1}{\sigma_j^2}-1).
\end{align}
Moreover, we need to calculate the derivative of $E[H_n]$ with respect of $\sigma_j^2$.
\begin{equation}
    \frac{d(\mathbb{E}[H_n])}{d\sigma_j^2} = \frac{d\Big(m\log{\sigma_j^2}+\frac{\mathbf{g}\mathbf{p}}{\sigma_j^2}+m(\frac{1}{\sigma_j^2}-1)\Big)}{d\sigma_j^2}
    =\frac{m}{\sigma_j^2}-\frac{\mathbf{g}\mathbf{p}}{(\sigma_j^2)^2}-\frac{m}{(\sigma_j^2)^2}.
\end{equation}
To determine the value of $\sigma_j^2$ which reach the peak point of $\mathbb{E}[H_n]$, we have to solve $\frac{d\mathbb{E}[H_n]}{d\sigma_j^2}=0$, then
\begin{equation}
     \sigma_{j}^2=1+g_jp_j(\mathsf{X}), \quad j=1,\dots,m.
\end{equation}

\section{Proof of (\ref{eq17})}\label{B}
In this section, we give the proof of bounded Lebesgue measure for $\mathbf{S}$. At first, due to space $\mathbf{p}$ which belongs to a certain ball in $\mathbb{R}^m$ with radius $P/m$, $\mathbf{S}$ belongs to a larger ball with the radius $r$
\begin{equation}
    r=\mathbb{E}[\frac{\sum_{i=1}^n|Z_{j,i}|^2}{n}]\Big(\mathbb{E}[g_j]\frac{P}{m}+1\Big)\leq\frac{1}{2}\Big(2m\frac{P}{m}+1\Big)\label{eq51}=P+\frac{1}{2},
\end{equation}
where (\ref{eq51}) follows from
\begin{align}
    &\mathbb{E}[g_j]=\frac{1}{m}trace\{(\mathbb{H}_0+Ad\mathbb{H}_1)^H(\mathbb{H}_0+Ad\mathbb{H}_1)\}\\
    &=\frac{1}{m}\Big(trace\{\mathbb{H}_0^H\mathbb{H}_0\}+2trace\{\mathbb{H}_0^H\mathbb{H}_1\}+trace\{\mathbb{H}_1^H\mathbb{H}_1\}\Big)\\
    &=\frac{1}{m}\Big(\sum_{i=1}^m|Z_i|^2+2\sum_{i=1}^m|Z_i|^3+\sum_{i=1}^m|Z_i|^4\Big)\\
    &\leq 2m.
\end{align}\par
We denote the $m$-dimensional ball in $\mathbb{R}^m$ by $B(m,r)$, where $n$ is the number of dimensions and $r$ is the radius of the ball. Thus, we can denote the volume of $B(m,1)$ by $s_m$. A ball with the radius $r$ has the volume
\begin{equation}
    vol_m(B(m,r))=s_mr^m.
\end{equation}\par
In order to evaluate $s_m$, we identify the ball in $\mathbb{R}^{m}$ with the Cartesian product $\mathbb{R}^2\times\mathbb{R}^{m-2}$. Furthermore, we apply the Fubini's theorem.
\begin{equation}
    s_m=vol_m(B(m,1))=\int_{B(m,1)}1d(x,y,z)
    =\int_{B(2,1)}\Big(\int_{B(m-2,\sqrt{1-x^2-y^2})}1dz\Big)d(x,y).
\end{equation}\par
In polar coordinates, we can obtain
\begin{equation}
    s_m=2\pi\int_{0}^{1}s_{m-2}(1-r^2)^{(m-2)/2}rdr,
\end{equation}
this equation leads to
\begin{equation}
    s_m=\frac{2\pi}{m}s_{m-2}.
\end{equation}\par
After recursion processing, we have
\begin{equation}
    s_m=\frac{\pi^{m/2}}{\Gamma(\frac{m}{2}+1)}.
\end{equation}\par
Therefore, 
\begin{align}
    vol_m(B(m,r))&=\frac{\pi^{m/2}}{\Gamma(\frac{m}{2}+1)}\times r^m\\
    &\leq\frac{\pi^{m/2}}{\Gamma(\frac{m}{2}+1)}\times (P+\frac{1}{2})^m\\
    &\leq \frac{\pi^{m/2}}{\sqrt{m}(\frac{m}{2})^{m/2}}\times (P+\frac{1}{2})^m\label{eq50}\\
    &\leq K_2,
\end{align}
where (\ref{eq50}) is obtained from Stirling's formula.

\section{Proof of (\ref{eq15})}\label{Appendix C}
It is convenient to separate the proof of (22) into two parts. At first, we address the converse part.
In the beginning of the proof, we need to introduce the following inequality\cite{b6}, for any $\gamma>0$,
\begin{equation}\label{eq18}
    1-\epsilon\leq \mathbb{P}[\frac{dP}{dQ}\geq\gamma]+\gamma\beta_{1-\epsilon}(P,Q).
\end{equation}
Using (\ref{eq18}) with $P=P_{\mathbb{Y}|\mathbb{X}=\mathsf{X}}$ and $Q=Q_{\mathbb{Y}|\mathbb{X}=\mathsf{X}}$, then we get for any $\eta$
\begin{equation}\label{eq29}
    \beta_{1-\epsilon}(P_{\mathbb{Y}|\mathbb{X}=\mathsf{X}},Q_{\mathbb{Y}|\mathbb{X}=\mathsf{X}})\geq e^{n\eta-nC}
    \Big(1-\epsilon-\mathbb{P}[\sum_{i=1}^n H_i(\mathbf{g},\mathbf{p})\leq \eta]\Big)
\end{equation}
with
\begin{equation}
    H_i=\sum_{j=1}^m \Big(\log{(1+g_jp_j(\mathsf{X}))}+1-\frac{|\sqrt{g_jp_j(\mathsf{X})}Z_{i,j}-1|^2}{1+g_jp_j(\mathsf{X})}\Big)
\end{equation}
where $Z_{i,j}\sim\mathcal{CN}(0,1)$.
To continue the proof, we need to introduce an important tool which is Berry-Esseen Theorem.
\begin{theorem}\label{the1}
    Let $X_k,k=1,\dots,n$ be independent with
    \begin{equation}
        \mu_k=\mathbb{E}[X_k],\quad
        \sigma^2=Var[X_k],\quad
        t_k=\mathbb{E}[|X_k-\mu_k|^3],\quad
        \sigma^2=\sum_{k=1}^{n}\sigma_k^2 \quad \textrm{and} \quad
        T=\sum_{k=1}^{n}t_k.
    \end{equation}
    Then for any $-\infty<
    \lambda<\infty$
    \begin{equation}
        \Big|\mathbb{P}\Big[\sum_{k=1}^{n}(X_k-\mu_k)\geq\lambda\sigma\Big]-Q(\lambda)\Big|\leq \frac{6T}{\sigma^3}
    \end{equation}
\end{theorem}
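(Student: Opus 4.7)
The plan is to prove the Berry-Esseen bound via the classical Fourier-analytic route due to Esseen, since the setting allows independent but not identically distributed summands and the constant $6$ is consistent with what this method yields after a careful optimization. First I would normalize: set $Y_k = (X_k - \mu_k)/\sigma$ so that $W = \sum_k Y_k$ has mean zero and variance one, and let $\beta = T/\sigma^3 = \sum_k \mathbb{E}[|Y_k|^3]$. The claim becomes $\sup_\lambda |F_W(\lambda) - \Phi(\lambda)| \le 6\beta$, where $F_W$ is the CDF of $W$. If $\beta \ge 1/6$ the inequality is vacuous (the left side is at most $1$), so I may assume $\beta < 1/6$ throughout.

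Next I would invoke Esseen's smoothing inequality: for every $U > 0$,
\begin{equation*}
\sup_\lambda |F_W(\lambda) - \Phi(\lambda)| \le \frac{1}{\pi}\int_{-U}^{U}\frac{|\varphi_W(t) - e^{-t^2/2}|}{|t|}\,dt + \frac{24}{\pi U\sqrt{2\pi}}.
\end{equation*}
This converts the sup-norm CDF distance into a weighted $L^1$ distance of characteristic functions on $[-U,U]$ plus a tail piece controlled by $U$, and it is the only non-probabilistic analytic input to the whole argument.

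The technical heart is then to estimate $\varphi_W(t) - e^{-t^2/2}$ for $|t|\le U$ with $U$ chosen of order $1/\beta$. I would Taylor-expand each $\varphi_{Y_k}$ to third order, $\varphi_{Y_k}(t) = 1 - \tfrac12 (\sigma_k/\sigma)^2 t^2 + r_k(t)$ with $|r_k(t)| \le |t|^3 \mathbb{E}[|Y_k|^3]/6$, then pass to logarithms (valid where the remainder stays below $1/2$, which is precisely why $U$ must be restricted in terms of $\beta$), sum over $k$, and compare to $-t^2/2$. This yields $|\varphi_W(t) - e^{-t^2/2}| \le C \beta |t|^3 e^{-t^2/3}$ on the admissible range. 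Substituting into the smoothing inequality, evaluating the resulting Gaussian integral $\int |t|^2 e^{-t^2/3}\,dt$, and optimizing $U \sim 1/\beta$ to balance the two contributions produces the bound $6\beta = 6T/\sigma^3$ after substituting back $\beta = T/\sigma^3$.

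The main obstacle is the constant-tracking in the third paragraph. The Taylor remainder for $\log$ of a product of characteristic functions, the Gaussian integral, and the numerical choice of $U$ must each be executed carefully enough that the accumulated factors collapse to $6$ rather than to a larger value; looser versions of each estimate give something like $8\beta$ or $10\beta$. Everything else is routine complex analysis and elementary probability, and the derivation applies identically to the non-IID setup here because only additivity of the cumulant-like expansion across $k$ is used, which holds as soon as the $X_k$ are independent.
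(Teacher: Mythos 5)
The paper does not actually prove this statement: Theorem~3 is the classical Berry--Esseen bound for sums of independent, non-identically distributed random variables, stated in Appendix~C purely as an imported tool (it is the form with constant $6$ found in Feller's treatise), so there is no in-paper proof to compare yours against. Judged on its own terms, your outline is the standard and structurally correct route: normalize to $W=\sum_k Y_k$, dismiss the case $\beta\ge 1/6$ as vacuous, invoke Esseen's smoothing inequality to trade the Kolmogorov distance for $\frac{1}{\pi}\int_{-U}^{U}|\varphi_W(t)-e^{-t^2/2}|\,|t|^{-1}dt$ plus a term of order $1/U$, control the integrand by third-order expansion of the summand characteristic functions, and take $U$ of order $1/\beta$. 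Every ingredient you name is the right one, and the additivity-across-$k$ remark correctly explains why nothing changes in the non-i.i.d.\ setting.

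That said, what you have is a plan rather than a proof, and the two places you defer are exactly where the claim could fail to deliver the stated constant. First, passing to logarithms requires $|\varphi_{Y_k}(t)-1|$ to stay bounded away from $1$ on all of $|t|\le U$, which needs the relation $(\sigma_k/\sigma)^2\le(\mathbb{E}|Y_k|^3)^{2/3}\le\beta^{2/3}$ (Lyapunov) to be threaded through the choice of $U$; you assert this is "precisely why $U$ must be restricted" but do not verify it. Second, the logarithm route, executed loosely, is known to land at a constant larger than $6$; Feller's classical derivation of the constant $6$ avoids logarithms and instead bounds the difference of products $\bigl|\prod_k\varphi_{Y_k}(t)-\prod_k e^{-\sigma_k^2t^2/2\sigma^2}\bigr|$ term by term. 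So either adopt that device or accept that your route, as sketched, guarantees only $C\beta$ for some absolute $C$ --- which, to be fair, is all the paper ever uses, since Theorem~3 enters Appendix~C only to produce $O(1/\sqrt{n})$ corrections. One last note: the hypotheses as printed contain a typo ($\sigma^2=\mathrm{Var}[X_k]$ should read $\sigma_k^2=\mathrm{Var}[X_k]$), which your normalization silently and correctly repairs.
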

Since $Z_{i,j}$ is an i.i.d complex Gaussian random variable, the first moment of $H_i$ is given by
\begin{align}
    \mathbb{E}[H_i]&=\mathbb{E}[\sum_{j=1}^m \Big(\log{(1+g_jp_j(\mathsf{X}))}+1-\frac{|\sqrt{g_jp_j(\mathsf{X})}Z_{i,j}-1|^2}{1+g_jp_j(\mathsf{X})}\Big)]\\
    &=\sum_{j=1}^m \Big(\log{(1+g_jp_j(\mathsf{X}))}+1-\frac{\mathbb{E}[|\sqrt{g_jp_j(\mathsf{X})}Z_{i,j}-1|^2]}{1+g_jp_j(\mathsf{X})}\Big)\\
    &=\sum_{j=1}^m \Big(\log{(1+g_jp_j(\mathsf{X}))}+1-\frac{\mathbb{E}[g_jp_j(\mathsf{X})|\Re\{Z_{i,j}^2\}|-2\sqrt{g_jp_j(\mathsf{X})}|\Re\{Z_{i,j}\}|+1+|\Im\{Z_{i,j}\}|]}{1+g_jp_j(\mathsf{X})}\label{eq19}\\
    &=\sum_{j=1}^m \log{(1+g_jp_j(\mathsf{X}))}\\
    &=C
\end{align}
where (\ref{eq19}) from $|\Re\{Z_{i,j}^2\}|=1$ and $|\Re\{Z_{i,j}\}|=|\Im\{Z_{i,j}\}|=0$.\\
The variance of $H_i$,
\begin{equation}
    \sigma^2 = \mathbb{E}[(H_i-E[H_i])^2]=\mathbb{E}\Big[\Big(\sum_{j=1}^m 1-\frac{|\sqrt{g_jp_j(\mathsf{X})}Z_{i,j}-1|^2}{1+g_jp_j(\mathsf{X})}\Big)\Big]=m-\sum_{j=1}^m\frac{1}{(1+g_jp_j(\mathsf{X}))^2}=V
\end{equation}
Furthermore, from the Radon-Nikodym derivative between $P_{\mathbb{Y}|\mathbb{X}}$ and $Q_{\mathbb{Y}|\mathbb{X}}$, we can define
\begin{equation}\label{eq27}
    B = 6\frac{\mathbb{E}[|J|^3]}{(\mathbb{E}[|J^2|])^{\frac{3}{2}}},
\end{equation}
where
\begin{equation}
    J = \sum_{j=1}^{m} \log(1+g_jp_j(\mathsf{X})) + \frac{p_j(\mathsf{X})+2\sqrt{p_j(\mathsf{X})}Z_j-p_j(\mathsf{X})Z_j^2}{1+p_j(\mathsf{X})}.
\end{equation}
Noting the fact that $p_j(\mathsf{X})\leq P,j=1,\dots,m$, so we can bound $|J|$, as
\begin{equation}\label{eq20}
    |J|\leq \sum_{j=1}^{m} \log(1+g_jP) + \Big(P+2\sqrt{P}|Z_j|+PZ_j^2\Big).
\end{equation}
Noting that the right-hand side of (\ref{eq20}) is independent of the choice of $p_j(\mathsf{X})$, for any choice of $p_j(\mathsf{X})$ there exists constants $\zeta_1\geq 0$ and $\zeta_2\geq 0$ that make,
\begin{align}
    \mathbb{E}[|J|^2]&\leq\zeta_1\label{eq23}\\
    \mathbb{E}[|J|^3]&\leq\zeta_2\label{eq24}
\end{align}
Since the variance of $J$ is
\begin{equation}
    \mathbb{E}[J^2]=m- \sum_{j=1}^{m} \frac{1}{(1+g_jp_j(\mathsf{X}))^2},
\end{equation}
and as $\sum_{j=1}^{m}p_j(\mathsf{X})=P$, we have at least one $p_j(\mathsf{X})$ which is larger than $\frac{P}{m}$. Therefore we obtain the lower bound for $\mathbb{E}[J^2]$, which is
\begin{equation}\label{eq21}
    \mathbb{E}[J^2]\geq m- \frac{1}{(1+g_j\frac{P}{m})^2}.
\end{equation}
By the Lyapunov inequality\cite{b27}, we have
\begin{equation}\label{eq22}
    (\mathbb{E}[|J|^2])^{1/2}\leq (\mathbb{E}[|J|^3])^{1/3}.
\end{equation}
Combining (\ref{eq21}) and (\ref{eq22}) together, we have the lower bound for $\mathbb{E}[|J|^3]$ as well. Thus for any choice of $p_j(\mathsf{X})$ there are constants $\zeta_3>0$ and $\zeta_4>0$ that make
\begin{align}
    \mathbb{E}[|J|^2]&\geq\zeta_3\label{eq25}\\
    \mathbb{E}[|J|^3]&\geq\zeta_4\label{eq26}
\end{align}
Combining (\ref{eq23}), (\ref{eq24}), (\ref{eq25}) and (\ref{eq26}) into (\ref{eq27}) yields
\begin{equation}
    0<B<+\infty.
\end{equation}
For sufficiently large $n$, we have 
\begin{equation}\label{eq33}
    a_n = 1-\epsilon-\frac{2B}{\sqrt{n}}>0.
\end{equation}
Without loss of generality, we can also assume for such $n$, 
\begin{equation}\label{eq31}
    \eta=-\sqrt{nV}Q^{-1}(a_n).
\end{equation}
From Theorem. \ref{the1}, we obtain
\begin{equation}
    \mathbb{P}[\sum_{i=1}^n H_i\leq \eta]\leq a_n+\frac{B}{\sqrt{n}}\leq 1-\epsilon-\frac{B}{\sqrt{n}}\label{eq28}.
\end{equation}
Substituting (\ref{eq28}) into (\ref{eq29}), we have
\begin{equation}
    \beta_{1-\epsilon}(P_{\mathbb{Y}|\mathbb{X}=\mathsf{X}},Q_{\mathbb{Y}|\mathbb{X}=\mathsf{X}})\geq e^{n\eta-nC}\frac{B}{\sqrt{n}}.
\end{equation}
Using the estimation of $K(n)$ in (\ref{eq30}),
\begin{equation}
    \log{K(n)}=\log{n}+\mathcal{O}(1).
\end{equation}
We arrive at the upper bound as
\begin{equation}
    R(n,\epsilon)\leq C-\frac{\eta}{n}-\frac{1}{n}\log{(1-\mathbb{P}[\sum_{i=1}^n H_i\leq \eta]-\epsilon)}+\frac{\log{n}}{n}+\mathcal{O}(\frac{1}{n})\leq C-\eta + \frac{1}{n}\log{n} +\mathcal{O}(\frac{\log{n}}{n})
\end{equation}
Using Taylor's theorem, for the interval of $\theta\in \big[1-\epsilon-\frac{2B}{\sqrt{n}},1-\epsilon\big]$, from (\ref{eq31}), we have
\begin{equation}\label{eq32}
    \eta = -\sqrt{nV}Q^{-1}(1-\epsilon)+2\sqrt{\sigma^2}B\frac{d}{dx}Q^{-1}(\theta)
\end{equation}
Since the derivative of inverse Gaussian Q-function is a continuous function in $(0,1)$, we can obtain the upper bound of second term of the right-hand side of (\ref{eq32}),
\begin{equation}
    2\sqrt{\sigma^2}B\frac{d}{dx}Q^{-1}(\theta)\leq 0
\end{equation}
So, we have
\begin{equation}
    R(n,\epsilon)\leq C+\sqrt{\frac{V}{n}}Q^{-1}(1-\epsilon)+\mathcal{O}(\frac{\log{n}}{n})
\end{equation}
Due to the property of $Q^{-1}(x)$, which is for any $\epsilon\in(0,1)$ $Q^{-1}(1-\epsilon)=-Q^{-1}(\epsilon)$. Thus, we complete the proof
\begin{equation}
    R(n,\epsilon)\leq C-\sqrt{\frac{V}{n}}Q^{-1}(\epsilon)+\mathcal{O}(\frac{\log{n}}{n})
\end{equation}
Secondly, we address the achievability part.\\
We set 
\begin{equation}
    a_n=1-\epsilon+\frac{2B}{\sqrt{n}}.
\end{equation}
Note that for sufficiently large $n$, $a_n<1$, so it makes $\eta$ in (\ref{eq33}) meaningful.\\
Using Theorem. \ref{the1}, and combining with $\mathbb{P}[\sum_{i=1}^n H_i\leq \eta]-a_n<0$, we have
\begin{equation}
    \mathbb{P}[\sum_{i=1}^n H_i\leq \eta]\geq a_n-\frac{B}{\sqrt{n}}\geq 1-\epsilon+\frac{B}{\sqrt{n}}.
\end{equation}
For $\eta=-\sqrt{nV}Q^{-1}(a_n)$, without loss of generality, we can assume
\begin{equation}
    \log\gamma = nC-\eta=nC+\sqrt{nV}Q^{-1}(a_n)
\end{equation}
Therefore, we obtain
\begin{equation}
    P_{\mathbb{Y}|\mathbb{X}}[i(\mathsf{X};\mathbb{Y})\geq\log\gamma]=\mathbb{P}[\sum_{i=1}^n H_i\leq\eta]\geq 1-\epsilon+\frac{B}{\sqrt{n}}
\end{equation}
by setting
\begin{equation}
    \tau=\frac{B}{\sqrt{n}}
\end{equation}

Assuming $\sigma$ is the variance of $i(\mathsf{X};\mathbb{Y})$, $T$ is the third moment of $i(\mathsf{X};\mathbb{Y})$, we have
\begin{align}
    \beta_{1-\epsilon+\tau}\leq& \mathbb{E}[\exp\{-i(\mathsf{X};\mathbb{Y})\}1_{\{i(\mathsf{X};\mathbb{Y})\geq\log\gamma\}}]\\
    \leq& \sum_{c=0}^{\infty}\exp\{-(\log{(\frac{\sqrt{n}\gamma}{\sigma})}+c\delta)\}\mathbb{P}\Big[\log{(\frac{\sqrt{n}\gamma}{\sigma})}\nonumber+c\delta
    \leq i(\mathsf{X};\mathbb{Y})\leq\log{(\frac{\sqrt{n}\gamma}{\sigma})}+(c+1)\delta\Big]\\
    \leq& \sum_{c=0}^{\infty}\exp\{-(\log{(\frac{\sqrt{n}\gamma}{\sigma})}+c\delta)\}\bigg( \int_{(\log{(\frac{\sqrt{n}\gamma}{\sigma})}+c\delta)/\sigma}^{(\log{(\frac{\sqrt{n}\gamma}{\sigma})}+c\delta+\delta)/\sigma}\frac{1}{\sqrt{2\pi}}e^{-t^2/2}dt+\frac{12T}{\sigma^3}\bigg)\label{eq43}\\
    \leq& \sum_{c=0}^{\infty}\exp\{-(\log{(\frac{\sqrt{n}\gamma}{\sigma})}+c\delta)\}\frac{1}{\sigma}\Big(\frac{\delta}{\sqrt{2\pi}}+\frac{12T}{\sigma^3}\Big)\\
    \leq& \frac{2\Big(\frac{\delta}{\sqrt{2\pi}}+\frac{12T}{\sigma^3}\Big)}{\sqrt{n}\gamma}\label{eq44}
\end{align}
where (\ref{eq43}) is from Theorem. \ref{the1} and (\ref{eq44}) from choosing $\delta=\log{2}$ and $\sum_{c=0}^{\infty}2^{-c}=2$.\\
Then, we obtain
\begin{align}
    \log\beta_{1-\epsilon+\tau}\leq&-\log\gamma-\frac{1}{2}\log{n}+\mathcal{O}(1)\label{eq34}\\
    =&-nC+\eta-\frac{1}{2}\log{n}+\mathcal{O}(1)\label{eq36}
\end{align}
From (\ref{eq35}), we can choose $\kappa_{\tau}$,
\begin{equation}\label{eq37}
    \log\kappa_{\tau}\geq-\frac{1}{2}\log{n}+\mathcal{O}(1).
\end{equation}
Thus, by combining (\ref{eq36}) and (\ref{eq37}), we can conclude the achievability part
\begin{equation}
    R(n,\epsilon)\geq C+\sqrt{\frac{V}{n}}Q^{-1}(1-\epsilon)+\mathcal{O}(\frac{\log{n}}{n})=C-\sqrt{\frac{V}{n}}Q^{-1}(\epsilon)+\mathcal{O}(\frac{\log{n}}{n}).
\end{equation}\par
This completes the proof.



\ifCLASSOPTIONcaptionsoff
  \newpage
\fi

\end{document}